\documentclass[10pt,twocolumn,twoside]{IEEEtran}

%\documentclass[11pt,draftcls,onecolumn]{IEEEtran} 
% Common commands and definitions

\usepackage{amssymb}
\usepackage{amsthm}
\usepackage{latexsym}
\usepackage{verbatim}
\usepackage[pdftex]{graphicx}
\usepackage{fixltx2e}
\usepackage{array}
\usepackage{booktabs}
\usepackage[square, numbers,sort&compress]{natbib}
\usepackage[cmex10]{amsmath}
\usepackage{setspace,algorithm,algorithmic,enumerate}

\newtheorem{theorem}{Theorem}
\newtheorem{proposition}{Proposition}

\newtheorem{corollary}{Corollary}

\newtheorem{definition}{Definition}

\newtheorem{problem}{Problem}

\newtheorem{remark}{Remark}

%    Densely printed itemized list,
{\begin{list}               %    with flush left bullets.
    {$\bullet$ \hfill}{
        \setlength{\leftmargin}{\parindent}
        \setlength{\parsep}{0.04\baselineskip}
        \setlength{\itemsep}{0.5\parsep}
        \setlength{\labelwidth}{\leftmargin}
        \setlength{\labelsep}{0em}}
    }
{\end{list}}

  % call \eqref from amstex
\providecommand{\cref}[1]{Chapter~\ref{chap:#1}}

\providecommand{\R}{\ensuremath{\mathbb{R}}}

\providecommand{\Z}{\ensuremath{\mathbb{Z}}}

\providecommand{\set}[1]{\left\{#1\right\}}

\providecommand{\rank}{\mathop{\mathrm{rank}}}

\renewcommand{\vec}[1]{\ensuremath{\boldsymbol{#1}}}
\providecommand{\mat}[1]{\ensuremath{\boldsymbol{#1}}}

% Some calligraphic letters

\providecommand{\calF}{\mathcal{F}}

\providecommand{\calX}{\mathcal{X}}
\providecommand{\calY}{\mathcal{Y}}
% Common matrices and vectors
\providecommand{\mA}{\mat{A}} 
\providecommand{\mC}{\mat{C}}

 \providecommand{\mP}{\mat{P}} 
\providecommand{\mQ}{\mat{Q}}

\providecommand{\mY}{\mat{Y}}

 \providecommand{\vk}{\vec{k}}

 \providecommand{\vN}{\vec{N}} 

 \providecommand{\vp}{\vec{p}}

\providecommand{\vu}{\vec{u}} 

\providecommand{\vx}{\vec{x}} \providecommand{\vy}{\vec{y}}
 \providecommand{\vi}{\vec{i}}

% symbols with wide hats

% symbols with wide tildas

\pdfminorversion=4

\usepackage[pdftex]{graphicx}
\graphicspath{{./TSP/}} \usepackage[cmex10]{amsmath}
\usepackage{setspace,enumerate,algorithm}
\newcommand{\ComplexityFont}[1]{%
{\ensuremath{\complexity@possiblymakesmaller{\complexity@fontcommand{#1}}}}
} \renewcommand{\ComplexityFont}[1]{{\ensuremath{\mathsf{#1}}}}

\usepackage{float}

\newfloat{alg}{tbh}{losf}
\floatname{alg}{Algorithm}

\interdisplaylinepenalty=2500

% *** ALIGNMENT PACKAGES ***

\usepackage{array}

\usepackage[caption=false,font=footnotesize]{subfig}

\usepackage{fixltx2e}

% *** PDF, URL AND HYPERLINK PACKAGES ***
%
\usepackage{url} \usepackage{booktabs}

\newcommand{\figdir}{./TSP/}

\begin{document}

\title{Phase Retrieval for Sparse Signals:\\Uniqueness Conditions}

\author{Juri Ranieri,~\IEEEmembership{Student Member,~IEEE,} Amina
  Chebira,~\IEEEmembership{Member,~IEEE,} Yue M.
  Lu,~\IEEEmembership{Senior Member,~IEEE,}
  and~Martin~Vetterli,~\IEEEmembership{Fellow,~IEEE.}% <-this % stops a
  \thanks{Juri Ranieri and Martin Vetterli are with the School of
    Computer and Communication Sciences, Ecole Polytechnique
    F\'ed\'erale de Lausanne (EPFL), Lausanne, Switzerland (e-mails:
    {juri.ranieri, martin.vetterli}@epfl.ch). Amina Chebira is with
    the Swiss Center for Electronics and Microtechnology (CSEM),
    Neuch{\^ a}tel, Switzerland (e-mail: amina.chebira@csem.ch). Yue
    M. Lu is with the School of Engineering and Applied Sciences, Harvard
    University, Cambridge, MA 02138, USA (e-mail:
    yuelu@seas.harvard.edu).}% <-this % stops a space
  }% <-thisstops a space

%\markboth{IEEE Transactions on Signal Processing}

% make the title area
\maketitle

\begin{abstract} In a variety of fields, in particular those involving
  imaging and optics, we often measure signals whose phase is missing
  or has been irremediably distorted. Phase retrieval attempts the
  recovery of the phase information of a signal from the magnitude of
  its Fourier transform to enable the reconstruction of the original
  signal. A fundamental question then is: \emph{``Under which
    conditions can we uniquely recover the signal of interest from its
    measured magnitudes?''}

  In this paper, we assume the measured signal to be sparse. This is a
  natural assumption in many applications, such as X-ray
  crystallography, speckle imaging and blind channel estimation. In
  this work, we derive a sufficient condition for the uniqueness of
  the solution of the phase retrieval (PR) problem for both discrete
  and continuous domains, and for one and multi--dimensional domains.
  More precisely, we show that there is a strong connection between PR
  and the \emph{turnpike problem}, a classic combinatorial problem. We
  also prove that the existence of \emph{collisions} in the
  autocorrelation function of the signal may preclude the uniqueness
  of the solution of PR. Then, assuming the absence of collisions, we
  prove that the solution is almost surely unique on 1--dimensional
  domains. Finally, we extend this result to multi--dimensional
  signals by solving a set of 1--dimensional problems. We show that
  the solution of the multi-dimensional problem is unique when the
  autocorrelation function has no collisions, significantly improving
  upon a previously known result. 
\end{abstract}

\begin{IEEEkeywords} Phase retrieval, turnpike problem, sparse
  signals, uniqueness conditions.
\end{IEEEkeywords}

\section{Introduction}

\IEEEPARstart{I}n many real-world scenarios, we naturally measure
the Fourier transform (FT) of a signal of interest instead of the
signal itself. During the measuring process, it may happen that the
phase of the FT is lost or irremediably distorted. The recovery of the
phase is fundamental to reconstructing the signal, and this recovery
process is known as \emph{phase retrieval} (PR). Phase loss problems
occur in many scientific fields, particularly those involving optics
and communications. For example, in X-ray crystallography, the
measurements are the diffraction patterns of a crystallized molecule
and we would like to recover the molecule itself.

Although the PR problem has a long history with a rich literature
\cite{Millane:1990pt, Fienup:1982va}, there are still open questions
regarding the uniqueness of the solution and the existence of reliable
algorithms to recover the signal. We underline that in many
applications the signal of interest is sparse: for example, the atoms
of a molecule are distinct elements in the spatial domain. However, a
review of the main results related to PR reveals that sparsity has
only been rarely exploited to obtain uniqueness conditions or
efficient reconstruction algorithms.  Moreover, we note that the
problem is usually defined on the discrete domain for simplicity,
while most of the applications involve continuous signals.

In this paper, we present a uniqueness condition for 1--dimensional
sparse signals exploiting a previous result related to the
\emph{turnpike problem}, a classic combinatorial problem. We show that
the same uniqueness condition holds for multidimensional signals. Note
that these results are valid both for discrete and continuous domains,
the condition relies solely on the sparsity and on the characteristics
of the support of the signal. The only difference between the discrete
and the continuous problem is the probability of satisfying the
uniqueness condition.

In what follows, we set the notation and we precisely state the PR
problem for continuous signals. We then describe a number of
applications, emphasizing the role of sparsity in these scenarios, and
we show that this property can be further exploited to obtain
a uniqueness condition.

\section{Problem statement and applications}

In this section, we state the phase retrieval for signals defined on
$D$--dimensional continuous domains. We underline the difficulties
characterizing these problems and we define the non-trivial concept of
\emph{unique solution}. We introduce a sparse model for continuous
signals and we present a number of applications that can exploit such
model.  Unless otherwise stated, we use the following notation:
\begin{itemize}
\item bold lower case symbols, such as $\vx$, for vectors,
\item bold capital symbols, such as $\mA$, for matrices,
\item $x_n$ is the $n$-th element of the vector $\vx$, $A_{m,n}$ is
the element in the $m$-th row and the $n$-column of $\mA$,
\item capital calligraphic letters, such as $\calX$, for sets.
\end{itemize}

\subsection{PR on continuous domains}

Consider a $D$--dimensional continuous real-valued signal $f(\vx):\R^D
\to \R$, where $\vx$ is the position vector in the spatial
domain. While all the presented work is focused on continuous signals,
our analysis is also valid for discrete signals, it is then sufficient to
restrict the domain of the function $f(\vx)$ to $\Z^D$.

We define the FT of the signal $f(\vx)$ as
\begin{align} \widehat{f}(\vu)\stackrel{\text{\tiny
def}}{=}\int_{\R^D}
f(\vx)\exp(-j2\pi\left<\vu,\vx\right>)d\vx, \nonumber 
\end{align} where $\vu\in\R^{D}$ is the position vector in the Fourier
domain, $\left<\cdot,\cdot\right>$ is the inner product between two
vectors and the hat indicates the FT. If
we measure $\widehat{f}(\vu)$, the signal can be directly recovered by
the inverse FT (IFT), 
\begin{align} {f}(\vx)\stackrel{\text{\tiny def}}{=}\int_{\R^D}
\widehat{f}(\vu)\exp(j2\pi\left<\vu,\vx\right>)d\vu. \nonumber
\end{align} We can represent the FT in polar form as
\begin{align}
\widehat{f}(\vu)=|\widehat{f}(\vu)|\exp\left
  (j\phi(\vu)\right), \nonumber
\end{align} where $\phi(\vu)$ is the phase of the FT. Then, we define
the PR problem as follows: given the magnitude $|\widehat{f}(\vu)|$ of
the FT, recover the original signal $f(\vx)$. Since the FT is a
bijective mapping, the problem is equivalent to recovering the phase
term $\phi(\vu)$, hence the name phase retrieval. It is easy to show
that the knowledge of $|\widehat{f}(\vu)|$ is equivalent to the
knowledge of the autocorrelation function (ACF), defined as
\begin{align} a(\vx)\stackrel{\text{\tiny
def}}{=}\int_{\R^D}f(\vy)f^*(\vx+\vy)d\vy. \nonumber
%\label{eq:ACF}
\end{align} More precisely, the ACF is the IFT of
$|\widehat{f}(\vu)|^2$. Now, we have all the ingredients to state the
PR problem for a continuous signal $f(\vx)$.
%%% 

\begin{figure*}[h!]
   \centering
   \subfloat{\includegraphics[scale=0.85]{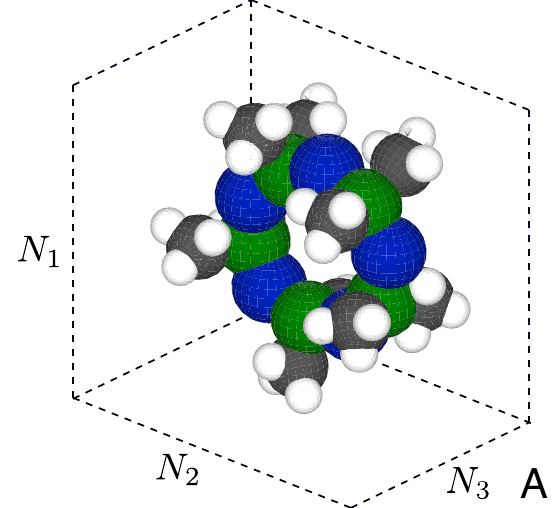}}
   \qquad
   \subfloat{\includegraphics[scale=0.85]{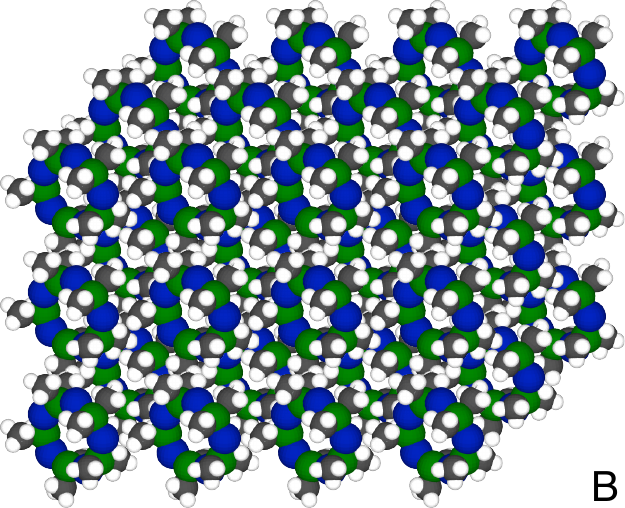}}  
   \qquad
   \subfloat{\includegraphics[scale=0.85]{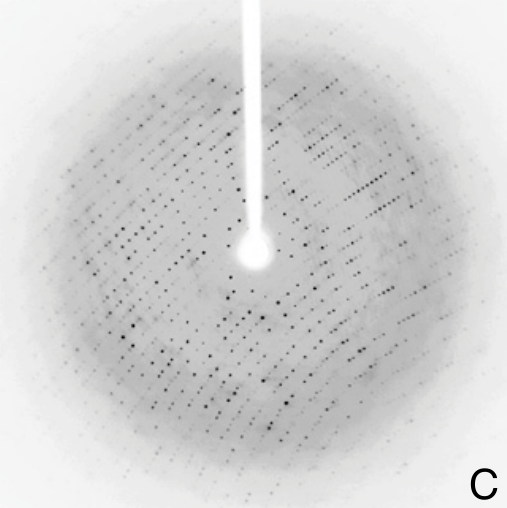}}        
   \caption[]{ Graphical representation of a unit cell (A), of a
     crystal (B) and of a diffraction pattern (C). Note that the
     crystal is simply a periodic repetition of the unit cell. In (C),
     one can see the sparsity of the diffraction pattern, reflecting
     the sparsity of the original crystal. }\vspace{-5mm}
  \label{fig:crystal}
\end{figure*}
\begin{problem}
\label{prob:pr_c} {\bf PR for continuous signals} Given the magnitude
$|\widehat{f}(\vu)|$ of the FT or the ACF $a(\vx)$ of a signal of
interest $f(\vx)$, recover the signal itself.
\end{problem}

We can show that in the general case the solution of Problem
\ref{prob:pr_c} is not unique. In fact, we can simply assign a random
phase to the measured magnitudes. Moreover, some information on the
signal $f(\vx)$ is entirely embedded into the phase of its FT and we
cannot hope to recover this information from the magnitude
only. Namely, the following transformations of $f(\vx)$,
\begin{align} f(-\vx), -f(\vx), f(\vx-\boldsymbol{\tau}), \nonumber
\end{align} do not influence the magnitude of the FT. Hence,
time--reversal, sign change and absolute position cannot be recovered
once the phase is lost.  It is then appropriate to define what ``unique'' means
with the following equivalence class
\begin{align} f(\vx)\sim g(\vx), \quad\text{if}\;f(\vx)=\pm g(\vk \pm
\vx).
\label{eq:class}
\end{align} for any $\vk\in\R^D$. Then, we say that a PR problem has a unique
solution if all the solutions are in the same equivalence class.

\subsection{Sparse signals}
A natural question arises from Problem \ref{prob:pr_c}: ``What are the
conditions that $f(\vx)$ must satisfy to have a unique PR?''

\noindent In this paper, we constrain the PR problem using a sparse
model for $f(\vx)$. More precisely, we define a $N$--sparse signal
using the Dirac delta notation, that is
\begin{equation} 
f(\vx)=\sum_{n=1}^N c^{(n)} \delta(\vx-\vx^{(n)}), 
\label{eq:model}
\end{equation} 
where the $n$-th delta has coefficient $c^{(n)}$ and is located at
$\vx^{(n)}$ and $N$ is finite. Then, the ACF is defined by the
following linear combination of $N^2-N+1$ deltas,
\begin{align} a(\vx)&=\sum_{n=1}^N\sum_{m=1}^N c^{(n)} c^{(m)}
\delta(\vx-(\vx^{(m)}-\vx^{(n)})) \nonumber\\ &=\sum_{n=0}^{N^2-N}d^{(n)}\delta(\vx-\vy^{(n)}),
\label{eq:ACF_sparse}
\end{align} where $d^{(n)}$ and $\vy^{(n)}$ are the coefficients and
the locations of the deltas in the ACF.  Note that the ACF is
centro-symmetric, meaning that for every delta located at $\vy^{(n)}$,
there is another one with the same coefficient located at
$-\vy^{(n)}$. Then, we can rewrite the ACF as

\begin{align}
a(\vx)=\sum_{n=1}^{L}d^{(n)}\delta(\vx+\vy^{(n)})+\sum_{n=0}^{L}d^{(n)}\delta(\vx-\vy^{(n)}),
\label{eq:ACF_sparse_L}
\end{align} 

where we can consider only the second sum instead of the whole ACF,
since it contains all the available information. 

\subsection{Applications}
\label{sec:applications}
While \eqref{eq:model} may look too simple to model signals of
interest for real-world applications, the following three scenarios
are of interest and fit this model. 

\subsubsection{X-ray crystallography}
The is the primary technique to determine the structure of
molecules. The experiment consists of the following steps: first, the
molecule $e(\vx)$ of interest is crystallized. The obtained crystal
$f(\vx)$ is simply a periodic repetition of the basic structure
$e(\vx)$, called \emph{unit cell},
\begin{align}
 f(\vx)=\sum_{\vi\in\Z^D}e(\vx-\vi\otimes\vN), \nonumber
\end{align} where $\vN$ is a vector containing the sizes of the unit
cell in each dimension. Second, the crystal is exposed to an X-ray
beam, under different angles.  For each angle, we have a diffraction
pattern that, mathematically speaking, is a slice of the
three--dimensional FT of the crystal.  See Figure \ref{fig:crystal}
for a graphical depiction of a unit cell, a crystal and a diffraction
pattern. The diffraction patterns are recorded using traditional
imaging techniques, such as CCDs, and only the magnitude is
acquired. Hence, we aim at recovering the spatial distribution
$e(\vx)$ of the molecule, called \emph{electron density}, from the
magnitude of the FT of $f(\vx)$. 

Due to the periodicity of the crystal, we are in fact  measuring the
magnitude of the Fourier series coefficients of the crystal,
$|\widehat{f}_{\vu}|$. This set of coefficients is equal, up to a
constant factor, to samples of the magnitude of the unit cell FT,
$|e(\vx)|$.

\begin{remark} The PR problem in X-ray crystallography is more complex
  than Problem \ref{prob:pr_c}. In fact, the set of measured samples
  is not sufficiently dense to reconstruct $|\widehat{e}(\vu)|^2$
  using Shannon's sampling theorem. More precisely, we have an
  undersampling factor of two for each dimension and we do not dispose
  of the entire ACF of $e(\vx)$. See  \cite{Millane:1990pt} for more
  details about this remark. 
\label{rmk:undersampling}
\end{remark} 

There is some a-priori information about the crystal that we can
exploit. For example, we can realistically model the unit cell
$e(\vx)$ as
\begin{align} e(\vx)=\sum_{n=1}^Nc^{(n)}\phi(\vx-\vx^{(n)}), \nonumber
%\label{eq:unitcell}
\end{align} where $\phi(\vx)$ is the electron density of a single
atom\footnote{In reality, every atom has a different electron density
  $\phi(\vx)$. However, this assumption is reasonable and many
  reconstruction methods used in crystallography consider similar
  assumptions \cite{Woolfson:1963ec}.}  that has a positive coefficient
$c^{(n)}$ and is located at $\vx^{(n)}$. We can now specify the
PR problem for crystallography.
\begin{problem}
  \label{prob:crystallography} {\bf PR for Crystallography} Consider a
  unit cell $e(\vx)$ with $N$ \emph{positive} atoms on
  a bounded domain. Given a set of magnitudes of the Fourier series
  coefficients $|\widehat{f}_{\vu}|$, estimate the locations and
  amplitudes of the atoms.
\end{problem}
\subsubsection{Speckle imaging in astronomy}
Another example of the PR problem can be found in astronomy, namely an
imaging method known as \emph{speckle imaging}. This technique
attempts to mitigate the resolution downgrade introduced by
atmospheric turbulences. Namely, the atmosphere blurs $M$ images
$\{g^{(i)}(\vx)\}_{i=1} ^M$ collected by a telescope and the blurring
is modeled as a linear filter that may vary for each image,
$\{s^{(i)}(\vx)\}_{i=1}^M$. The $i$-th measured image, also called
speckle, is the convolution between the astronomic object $f(\vx)$ and
the $i$-th linear filter
\begin{align} 
g^{\,(i)}(\vx)=f(\vx)*s^{\,(i)}(\vx), \quad i=1,\dots,M. \nonumber
\end{align} 
See Figure \ref{fig:speckles} for an example of the speckles $g^{\,
  (i)}(\vx)$ and the target of the astronomic observations $f(\vx)$.

We reduce the atmospheric distortion and some potential additive white
noise by taking the average of the squared magnitudes of FT of the
images as
\begin{align} \frac{1}{K}\sum_{i=1}^{M}|{\widehat
g}^{\,(i)}(\vu)|^2=|\widehat{f}(\vu)|^2\frac{1}{M}\sum_{i=1}^M|\widehat
s^{\,(i)}(\vu)|^2, \nonumber
%\label{eq:speckles_acf}
\end{align} 
where ${\widehat g}^{\,(i)}(\vu)$, $\widehat{f}(\vu)$ and $\widehat
s^{\,(i)}(\vu)$ are the FT of the measured image, the object of
interest, and the transfer function of the atmosphere,
respectively. Note that the averaging strategy is effective since we
assume that the atmospherical transfer functions generally affects
only the phases of $\widehat{f}(\vu)$ \cite{Wang:1978wl}. The averaged
atmospherical transfer function $\sum_{k=1}^M|\widehat s^{\,
  (i)}(\vu)|^2$ is estimated using atmospheric models or images of a
reference astronomical object.

\begin{figure*}[t!]
  \centering
  \subfloat{\includegraphics[scale=0.82]{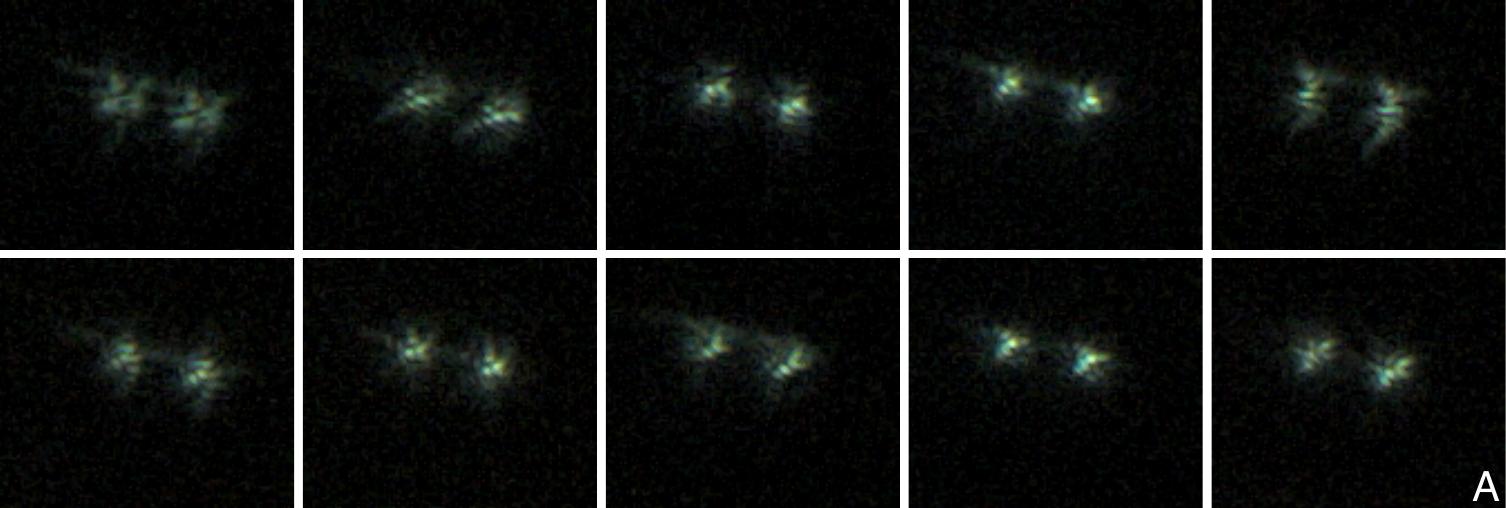}}
  \qquad
  \subfloat{\includegraphics[scale=0.82]{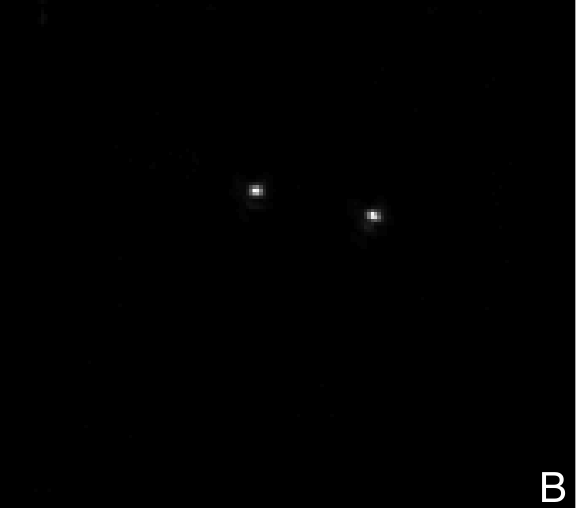}}       
  \caption[]{ An example of the input and output data of Problem
  \ref{prob:imaging}. In (A) we can see a set of 10 speckle images
  of a double star called $\epsilon$ Lyrae collected by Josef Popsel
  at the Capella Observatory, Mount Skinakas, Crete, Greece
  \cite{Hirsch:2011bx}. In (B), the high resolution image of
  the stars obtained through PR.}\vspace{-5mm}
  \label{fig:speckles}
\end{figure*}

PR is necessary to recover the high resolution image of the astronomic
object of interest from $|\widehat{f}(\vu)|^2$. Note that we
introduced the problem considering a continuous model for both the
astronomical object and the measured images. However, since the images
are generally measured and processed as sampled data, we may consider
a set of discrete images $\{g^{(i)}_{\vk}\}_{i=1}^M$ and the DFT is
usually employed.

We assume that $f(\vx)$ is sparse, 
\begin{align}
f(x)=\sum_{n=1}^N c^{(n)}\delta(\vx-\vx^{(n)}), \nonumber
\end{align}
since the astronomical object is composed of a set of stars that can
be modeled as Dirac deltas. Therefore, we have the following PR
problem for speckle imaging.

\begin{problem}
  \label{prob:imaging} {\bf PR for Speckle Imaging} Let $f(\vx)$ be a
  $N$-sparse function and $\{g^{(i)}(\vx)\}_{i=1}^M$ be a set of
  speckle images distorted by the atmosphere. Assume we measure a set
  of sampled images $\{g^{(i)}_{\vk}\}_{i=1}^M$, where $\vk\in\Z^D,\;
  D\ge 2$. Estimate the locations $\vx^{(n)}$ and the coefficients
  $c^{(n)}$ of the stars.
\end{problem}

\subsubsection{Blind channel estimation}
We conclude this section with an interesting example of 1--dimensional
PR: blind estimation of a communication channel. The knowledge of the
channel impulse response is fundamental for wireless communications
systems, such as the ones based on Orthogonal Frequency Division
Multiplexing \cite{Nee:2000:OWM:555664}. According to the theory of
multi-path propagation \cite{Kang:1999vi}, the channel $g(t)$ can be
faithfully modeled as 
\begin{align} 
g(t)=\sum_{n=1}^Nc^{(n)}\delta(t-t^{(n)}),
\label{eq:channel}
\end{align} where $t$ is the time variable and the $N$ deltas describe
the multi-path phenomenon. More precisely, each delta represents a
secondary communication path generated by a reflective body between
the source and the receiver.

We would like to estimate the locations and the coefficients of the
deltas, without having direct control of the channel input, hence the
``blind estimation'' terminology. We only measure samples of the
output $y(t)$ of the channel, that is the convolution between the
input $f(t)$ and the channel itself,
\begin{align} y(t)=(g*f)(t).\nonumber
\end{align}

The input data $f(t)$, which is the result of the modulation of a
discrete sequence $x_n$, is usually whitened to achieve the maximum
capacity of the channel. If the input sequence is statistically white,
then the magnitude of the FT of the output signal is in expectation
equal to the magnitude of the FT of the channel. Once more, we cannot
access the continuous-time output $y(t)$, but only a set of $M$
samples $\{y_{k}\}_{k=1}^M$. To recover the channel $g(t)$, we can
take the DFT of the collected samples, keep the magnitudes and solve
the following PR problem.

\begin{problem}
  \label{prob:channel} {\bf PR for Blind Channel Estimation} Let
  $g(t)$ be a multi-path fading communication channel as defined in
  \eqref{eq:channel}, where $N$ is finite and generally small. Assume
  the input of the channel to be properly whitened. Then, estimate the
  channel impulse response $g(t)$ from a set of samples
  $\{y_{k}\}_{k=1}^M$ of the output $y(t)$.
\end{problem}

We conclude this list of applications emphasizing the \emph{leitmotif} connecting all the
different applications: we are interested in PR for $N$--sparse signal
$f(\vx)$ defined on a $D$--dimensional continuous domain. We collect
samples of the signal and the phase information is lost, as in blind
channel estimation, or irreparably distorted, as in speckle
imaging. We would like to recover the sparse components in the
continuous domain, without discretizing the solution's domain.

This approach already proved beneficial in other domains. For example,
it has been shown that it is possible to recover a $N$-sparse signal
$f(x)$ from only $2N+1$ samples of the filtered signal $(g*f)(x)$, see
\cite{Vetterli:2002bs}. Another example where the continuous-time model
has been proven to be effective is in channel estimation. In
\cite{Barbotin:2012vt}, the authors demonstrated that the channel
estimator based on the continuous-time model achieves better
performance when compared to the state-of-the-art discrete approaches.

\section{Literature review}
\label{sec:literature}
We present a literature review that covers theoretical and algorithmic
results for both continuous and discrete PR. Note that most of the
works focused on the latter, given the difficulties of treating the PR
for continuous signals.

\subsection{Continuous PR}
Most of the relevant works connected to the continuous sparse PR
problem was developed in combinatorics for the \emph{turnpike} problem
\cite{Lemke:2002um}. The turnpike problem deals with the recovery of
the locations of a set of points from their unlabeled distances. Note
that the recovery of the support of $f(\vx)$ from the support of
$a(\vx)$ is an instance of such problem. A theorem presented by
Piccard in 1939 \cite{Piccard:1939tr} gives a sufficient condition for
the uniqueness of the turnpike problem. Unfortunately, a
counterexample to the theorem was first found by Bloom et
al. \cite{Bloom:1975vq} and its generalization was recently obtained
by Bekir et al. \cite{Bekir:2007kx}. A similar but weaker condition
for multidimensional signals has been recently obtained by Senechal in
2008 \cite{Senechal:2008ds}. Skiena et al. \cite{Lemke:2002um}
proposed a non-trivial algorithm for solving the problem. It is known
as the \emph{backtracking algorithm} and solves any instance of the
turnpike problem providing the existence of (possibly multiple) valid solutions.
The algorithm has a polynomial computational complexity when the set
$\{\vx^{(n)}\}$ is drawn at random. Zhang \cite{Zhang:1994td} showed
how to build sets of points achieving the worst case computational
complexity, that is $\mathcal{O}(2^n n\log n)$.

An equivalent problem has been stated for \emph{restriction site
  mapping}, an interesting task in computational molecular biology,
where a particular enzyme is added to a DNA sample, so that the DNA is
cut at particular locations $\{x^{(n)}\}_{n=1}^N$, known as
restriction sites. One can find the distance between each pair of
restriction sites, $\{x^{(n)}-x^{(m)}\}_{n,m=1}^N$, using gel
electrophoresis. Given the distances, we would like to recover the
locations of the sites. This technique is used for DNA mapping and it
usually involves different enzymes. When a single enzyme is used, it
is known as \emph{partial digest} \cite{Skiena:1994uo}. Note that it
has been shown by Cieliebak et al.  \cite{Cieliebak:2003uq} that the
partial digest problem with noisy measurements is $\mathsf{NP}$-hard.
The partial digest problem is in fact a turnpike problem with integer
locations---a bridge between continuous and discrete PR problems.

\subsection{Discrete PR}
Most of the literature focused on the discrete PR, that is when
$\vx\in\Z^D$, aiming to reduce the complexity of the solution.  The
first studies of the discrete PR problem appeared in control theory
and signal processing, where PR has been studied for the estimation of
the Wiener filter. In these fields, PR is known as spectral
factorization and a review of its theory and of the related algorithms
is given in \cite{Sayed:2001xr}. Among the presented methods, the
so-called ``Bauer'' method, described in \cite{Bauer:1955qo}, is the
most interesting one given the performance and the elegant matrix
formulation. However, the theory of spectral factorization is
focused on minimum phase solutions, that are stable and causal. These
solutions are not of interest for the applications given in Section
\ref{sec:applications}.

The uniqueness of the discrete PR problem has been studied for
multidimensional discrete signals. One of the main results is given by
Hayes \cite{Hayes:1982ud}: the set of positive finitely supported
images $f_{\vk}$ which are not uniquely recoverable has measure
zero. The results are derived using the theory of multidimensional
polynomials. A possible algorithm to recover signals from the
magnitudes of the FT is also given, but it does not achieve
satisfactory results according to the authors. Note that this
uniqueness result cannot be directly applied to X-ray crystallography
(see Remark \ref{rmk:undersampling}).

On the algorithmic front, many reconstruction algorithms were developed
for Problem \ref{prob:crystallography} and a review is given in
\cite{Sayre:2002if}. Among them, \emph{ab-initio} or \emph{direct}
methods were introduced in the late 50s and have the considerable
advantage of not requiring any prior information regarding the
crystals. The state of the art among these type of algorithms is
\emph{charge flipping} \cite{Oszlanyi:2004gb}. It performs two
operations iteratively, one in the spatial domain where it imposes the
positiveness of the electron density and the bounded support, one in
the Fourier domain where it imposes the measured magnitudes. This
algorithm was first presented in 2004 \cite{Oszlanyi:2004gb}, while
some of the recent developments are described in
\cite{Oszlanyi:2008fx}. It can be seen as a version of the
Gerchberg-Saxton algorithm \cite{Gerchberg:1972un}, where the
positivity constraint is enforced if the electron density is above a
certain threshold or set to zero otherwise.

Recent work defined efficient convex relaxations for solving discrete
PR problems. These approaches are potentially more stable w.r.t noise
and are capable of avoiding local minima. This strategy has been
introduced by Lu et al. \cite{Lu:2011uz}, who described a necessary
condition for the uniqueness of PR for discrete 1--dimensional sparse
signals together with a reconstruction algorithm based on the lifting
of the problem in a higher dimensional linear space that is solved by
traditional convex optimization solvers. A similar approach, but
introducing random masks to improve the redundancy of measurements,
has been introduced by Cand{\` e}s et al.,
\cite{Candes:2011ve,Candes:2011uz}. Hassibi and his collaborators
\cite{Jaganathan:2012ta} proposed an improved algorithm and sufficient
probabilistic uniqueness conditions based on the sparsity of the
signal of interest. Waldspurger et al. \cite{Waldspurger:2012wu}
formulated another tractable convex relaxation similar to the
classical MaxCut semidefinite program \cite{Goemans:1995ug} that
achieves better reconstruction performance when compared to other
convex relaxations.

PR has been generalized to any linear operator beyond the FT. More
precisely, we choose a frame $\set{\calF_i}_{i=1}^{K}$ and we collect
measurements of an unknown signal $f$ using the elements of this frame
$\set{w_i}_{i=1}^K=\set{\left<\calF_i, f
  \right>}_{i=1}^K$. Equivalently to the PR problem, we assume we can
only rely on the magnitude of the measurements and obtain a
generalized PR problem: from the magnitude of the expansion
$\set{|w_i|}_{i=1}^K$, recover the original signal $f$. Note that if
the chosen frame is the Fourier frame, then we have an instance of PR
on a discrete domain.  Balan et al. formulated the problem and studied
different theoretical and algorithmic aspect of the recovery of
signals from the magnitude of generic frame
coefficients. Specifically, fast algorithms are given in
\cite{R.Balan:2007tw}, while the statement of equivalent problems and
the construction of particular frames for which the reconstruction is
unique are given in \cite{Balan:2007jw,Balan:2005mb},
respectively. Relevant uniqueness results are given by Chebira et
al. \cite{Chebira:2010os}, where a necessary and sufficient condition
for uniqueness has been described, however it requires exponential
time to be checked. Note that the aforementioned convex relaxations
can be applied to this generalized PR.

Other generalization of the PR problem have been considered. Oppenheim
et al. proved the uniqueness of PR problems up to the knowledge of the
signs of the Fourier coefficients in \cite{Van-Hove:1983yo}. A general
analysis of the phase loss and the magnitude loss is given in
\cite{Hayes:1980ss} and an extension to the multidimensional case is
given in \cite{Hayes:1982ud}. The main result concerns the uniqueness
and the reconstruction of the magnitude loss problem.

% \begin{figure*}[t!]  \centering
%   \subfloat{\includegraphics[scale=0.85]{\figdir/collision_a.pdf}}
%   \qquad
%   \subfloat{\includegraphics[scale=0.85]{\figdir/collision_b.pdf}}
%   \caption{ In (A), we show a signal $f(\vx)$ where
%     $\{x^{(n)}\}=\{0,2,3,7,10\}$ and $c^{(n)}=1$ for all $n$. In (B),
%     we plot the ACF of $f(\vx)$ for $x\ge 0$, where you can see the
%     collision at $x=3$ and $x=7$. These collisions are generated by
%     the differences $x^{(5)}-x^{(3)}$, $x^{(4)}-x^{(1)}$. Note that
%     it is easy to detect collisions from the ACF given in this example
%     because all the deltas have the same coefficients in $f(x)$; this is
%     not true in general.}\vspace{-5mm}
%   \label{fig:collision}
% \end{figure*}

\section{Uniqueness of the sparse PR problem}
\label{sec:uniqueness} In this section, we state sufficient conditions
to have a unique PR for sparse signals, whether discrete or continuous. We use
the same notation for both problems,
$f(\vx)=\sum_{n=1}^Nc^{(n)}\delta(\vx-\vx^{(n)})$, where $\vx^{(n)}$
is constrained to the set of integers $\Z^{D}$ for the discrete
problem. 

We use a divide and conquer approach to derive the uniqueness
condition. First, we notice that the locations of the deltas of the
ACF contain more \emph{information} than their coefficients. In fact, if
all the deltas have the same coefficient, the coefficients of the ACF do
not carry any information. Therefore, we consider the problem of
recovering the support of $f(\vx)$ given the support of its ACF
$a(\vx)$.  We then use the coefficients to further restrict the
possibility of having a non-unique solution.

We consider the set of locations $\set{\vx^{(n)}}_{n=1}^N$ and derive
the \emph{set of differences}
$\mathcal{D}=\set{\vx^{(n)}-\vx^{(m)}}_{n,m=1}^ {N,N}$. Given the
possibility of repeated elements, $\mathcal{D}$ is formally a
multiset. Looking at \eqref{eq:ACF_sparse}, we also notice that all
the elements of a sparse ACF are supported on the set of differences,
that is $\vy^{(n)}\in\cal{D}$.

If we attempt to recover the support of a sparse signal from the
support of its ACF, we realize that the lack of labeling of the
elements of $\cal{D}$ makes the problem combinatorial, that is all the
possible labelings must be tested to find the optimal
solution. Moreover, the solution may be even more complex if the ACF
has ``\emph{collisions}'': two deltas of the ACF located at the same
position due to two couples of equi-spaced deltas in the signal
$f(\vx)$.

\begin{definition}[Collision] We say there is a collision in the ACF
  when $\exists\, n\neq m \text{ such that } \vy^{(n)}=\vy^{(m)}$.
\end{definition} 

In other words, let $\vx^{(i)}$,$\vx^{(j)}$,$\vx^{(k)}$ and
$\vx^{(l)}$ be the locations of four distinct deltas of a sparse
signal $f(\vx)$, then we say that we have a collision in the ACF if
$\vx^{(i)}-\vx^{(j)}=\vx^{(k)}-\vx^{(l)}$.  

The main reason why collisions are problematic is the impossibility of
knowing a-priori how many $\vy^{(n)}$ are colliding on the same
element of the ACF that we observe. In what follows, we show that if
the observed ACF $a(\vx)$ does not have collisions, we are able to
recover uniquely the sparse signal $f(\vx)$ in most of the cases.

\subsection{Uniqueness condition: collision-free 1--dimensional ACFs}
\label{sec:1d}

We assume that we first want to recover the support of $f(\vx)$ from
the support of $a(\vx)$. Here we study the problem for a
one--dimensional signal $f(x)$. We consider the set of differences
$\mathcal{D}$ as the input and the locations of the deltas
$\set{x^{(n)}}_{n=1}^{N}$ as the output of the following problem:
\begin{problem} {\bf Support Recovery} Given all the pairwise
  distances $\{x^{(n)}-x^{(m)}\}_{n,m=1}^{N,N}$ between a set of $N$
  points lying on a 1--dimensional domain, recover their locations
  $\{x^{(n)}\}_{n=1}^N$.
\label{prob:rec_points}
\end{problem} Note that we have no information about the labeling of
the pairwise differences in $\cal{D}$. Therefore, Problem
\ref{prob:rec_points} is combinatorial and is equivalent to an
instance of the \emph{turnpike problem} \cite{Lemke:2002um}.  If the
naming was known, the problem could be easily solved by
\emph{multidimensional scaling} \cite{Cox:2000uq}.

We introduce the definition of homometric sets to define the uniqueness
of Problem \ref{prob:rec_points}.
\begin{definition} [Homometric Sets] Two sets $\calX$ and $\calY$
are said to be homometric if and only if their difference sets are
congruent, that is $\mathcal{D}_\calX=\mathcal{D}_\calY$.
\label{def:homometric}
\end{definition}
\noindent In this section, we assume that all the differences are
different from each other, i.e. we do not have any collision in the
ACF. This is equivalent to saying that the set $\mathcal{D}$ has no
repeated elements. It turns out that the problem of the support
recovery was first posed by Patterson \cite{Patterson:1935wy} and a
possible solution was proposed by Piccard in 1939
\cite{Piccard:1939tr}. More precisely, Piccard suggested that if there
are no collisions, the solution of the turnpike problem is always
unique.

Unfortunately, a counterexample to this result was found
in 1975 by Bloom \cite{Bloom:1975vq}. Consider
$\mathcal{X}=\set{0,1,4,10,12,17}$ and
$\mathcal{Y}=\set{0,1,8,11,13,17}$, then
\begin{align}
\mathcal{D}_\mathcal{X}=\mathcal{D}_\mathcal{Y}=\set{0,1,2,3,4,5,6,7,8,9,10,11,12,13,16,17}.\nonumber
\end{align}
Recently, this counterexample has been proved to belong to a unique
parametric family of counterexamples by Bekir \cite{Bekir:2007kx}.

\begin{theorem} {\bf \cite{Bekir:2007kx}} 
  If $\mathcal{X}$ and $\mathcal{Y}$ are finite sets of points whose
  differences sets, $\mathcal{D}_\mathcal{X}$ and
  $\mathcal{D}_\mathcal{Y}$, contain no repeated elements, then the
  turnpike problem has always a unique solution unless the elements of
  $\mathcal{X}$ and $\mathcal{Y}$ belong to single and unique infinite
  parametric family of six elements. More precisely, given
  $\vp=(p_1,p_2)\in\R^2$ and the two following parametric sets,
  $\mathcal{X}=\set{0,p_1,p_2-2p_1,2p_2-2p_1,2p_2,3p_2-p_1}$ and
  $\mathcal{Y}=\set{0,p_1,2p_1+p_2,p_1+2p_2,2p_2-p_1,3p_2-p_1}$, the
  two difference sets are congruent,
  $\mathcal{D}_{\calX}=\mathcal{D}_{\calY}.$
\label{thm:ACF_support}
\end{theorem}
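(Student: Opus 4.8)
The statement bundles two claims of very different character, and I would treat them separately. The first is a \emph{classification}: among all collision-free sets, the only homometric (i.e.\ non-congruent) pairs are those of the displayed two-parameter family. The second is a \emph{verification}: the displayed $\calX$ and $\calY$ really are homometric for every $\vp=(p_1,p_2)$. The classification is the genuinely hard combinatorial content, and it is precisely the contribution of \cite{Bekir:2007kx}; I would invoke it as given rather than reprove it. What remains to be checked inside the paper is the verification, which is elementary and which I would carry out explicitly, since it is exactly what certifies that the family is a bona fide obstruction to uniqueness and not an empty statement.

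For the verification I would use Definition~\ref{def:homometric} directly: homometry means $\mathcal{D}_\calX=\mathcal{D}_\calY$, so I simply compute both difference multisets and match them. A six-point set produces fifteen signed differences, so this is a finite identity in the two variables $p_1,p_2$. Listing the fifteen pairwise differences of each set and collecting terms, both $\calX$ and $\calY$ yield the same collection $\set{p_1,\ 2p_1,\ p_2,\ p_2\pm p_1,\ p_2\pm 2p_1,\ p_2-3p_1,\ 2p_2,\ 2p_2\pm p_1,\ 2p_2-2p_1,\ 2p_2-3p_1,\ 3p_2-p_1,\ 3p_2-2p_1}$, so the two difference sets coincide identically in $\vp$. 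An efficient way to organize this bookkeeping is the generating function $P_\calX(z)=\sum_{x\in\calX}z^{x}$, whose product $P_\calX(z)P_\calX(z^{-1})$ has the difference multiset as its coefficients, reducing $\mathcal{D}_\calX=\mathcal{D}_\calY$ to $P_\calX(z)P_\calX(z^{-1})=P_\calY(z)P_\calY(z^{-1})$. As a sanity check, the specialization $p_1=1,\,p_2=6$ returns exactly Bloom's pair $\set{0,1,4,10,12,17}$ and $\set{0,1,8,11,13,17}$, confirming that this family subsumes the earlier counterexample and is consistent with \cite{Bloom:1975vq}.

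The structural reason such a family can exist at all is the \emph{reversal trick}: if $P_\calX(z)=A(z)B(z)$ factors and $P_\calY$ is obtained by replacing $B(z)$ with its reciprocal $z^{\deg B}B(z^{-1})$, then $P_\calY(z)P_\calY(z^{-1})=A(z)A(z^{-1})B(z)B(z^{-1})=P_\calX(z)P_\calX(z^{-1})$ automatically, so the two sets are homometric while being generically non-congruent. The main obstacle is therefore not this verification but the converse embedded in the classification: proving that \emph{no other} collision-free homometric pairs exist. That requires controlling every factorization-and-reversal of the generating function compatible with the no-repeat hypothesis and showing they all collapse to the single displayed family, which is the delicate case analysis of \cite{Bekir:2007kx}. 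Accordingly, I would present the verification in full and defer the classification to that reference, noting only that the no-collision hypothesis---the fifteen differences above are distinct for generic $\vp$---is exactly what makes the family a legitimate counterexample rather than an artifact of coincident distances.
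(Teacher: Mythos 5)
Your proposal is correct, and your division of labor is exactly the right one: the classification (that the displayed family is the \emph{only} collision-free homometric pair) is the hard combinatorial content of \cite{Bekir:2007kx}, and it must be invoked, not reproved. Note that the paper itself supplies no proof at all for this theorem---it is imported verbatim as a cited result---so your treatment actually goes beyond the paper's by carrying out the verification half. That verification checks out: enumerating the fifteen pairwise differences of $\calX=\set{0,p_1,p_2-2p_1,2p_2-2p_1,2p_2,3p_2-p_1}$ and of $\calY=\set{0,p_1,2p_1+p_2,p_1+2p_2,2p_2-p_1,3p_2-p_1}$ indeed yields the identical multiset $\set{p_1,2p_1,p_2,p_2\pm p_1,p_2\pm 2p_1,p_2-3p_1,2p_2,2p_2\pm p_1,2p_2-2p_1,2p_2-3p_1,3p_2-p_1,3p_2-2p_1}$, and the specialization $p_1=1$, $p_2=6$ recovers Bloom's pair $\set{0,1,4,10,12,17}$, $\set{0,1,8,11,13,17}$ from \cite{Bloom:1975vq}, consistent with the paper's own example. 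Your generating-function formulation $P_\calX(z)P_\calX(z^{-1})=P_\calY(z)P_\calY(z^{-1})$ and the factorization-reversal explanation are standard for homometric sets and correctly identify why such families exist and why excluding all others is the delicate part. Two trivial quibbles: the fifteen differences are \emph{unsigned} (unordered pairs), not signed; and you should state explicitly that genericity of $\vp$ is what keeps those fifteen values distinct, so that the family really sits inside the collision-free regime where the theorem's hypothesis applies.
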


Note that the equations defining the elements of sets $\mathcal{X}$
and $\mathcal{Y}$ are linear combinations of $\vp$. This suggests that
we can geometrically characterize (with linear subspaces) the supports
of signals that generate a turnpike problem without a unique solution.

\begin{corollary}
  The sets of points that generate a turnpike problem without a unique
  solution belong to the following 2--dimensional linear subspaces,
\begin{align}
\begin{bmatrix} 0& 0\\ 1& 0\\ -2 & 1\\ -2 & 2\\ 0 & 2\\ -1 & 3
\end{bmatrix}
\begin{bmatrix} p_1\\ p_2
\end{bmatrix}=\mQ_{\calX}\vp, \quad \begin{bmatrix} 0& 0\\ 1& 0\\ 2 & 1\\ 1 & 2\\ -1 & 2\\ -1 & 3
\end{bmatrix}
\begin{bmatrix} p_1\\ p_2
\end{bmatrix}=\mQ_{\calY}\vp.\nonumber
%\label{eq:signal_support_mat}
\end{align}
\label{cor:sig_support}
\end{corollary}

Even if these two linear subspaces define completely the sets
$\cal{X}$ and $\cal{Y}$, we need to define a proper ordering of the
elements. In fact, while the sets do not have by definition a defined
order, the two linear subspaces induce a precise ordering. We can
choose any permutation as soon as it is univocally defined: in what
follows, we always consider the permutations $\mat{\Pi}_{{\calY},\vp}$
and $\mat{\Pi}_{{\calX},\vp}$ that sort in an increasing order the
vectors $\mat{\Pi}_{{\calY},\vp}\mQ_{\calY}\vp$ and
$\mat{\Pi}_{{\calX},\vp}\mQ_{\calX}\vp$, respectively. Equivalently,
we always consider an operator that takes the elements from the sets
$\cal{X}$ and $\cal{Y}$ and sorts them in an increasing order.

Note that the permutation and the operator are unique for each $\vp$
and depend only on the direction of $\vp$: in fact, for different
magnitudes of $\vp$, the ordering does not change.  Moreover, it is
easy to show that we have a finite number of permutations for all the
$\vp$. In fact, the number of permutations is upper bounded by 6
factorial, being the number of possible permutation of 6 elements in a
set.

We defined the geometry of these sets of points as a manifold
generated by a linear model and a varying permutation. In what
follows, we take the span of these permuted linear systems to describe
the sets of supports without a unique solution to the turnpike
problem.

\begin{figure}[t!]  \centering
\includegraphics[scale=0.85]{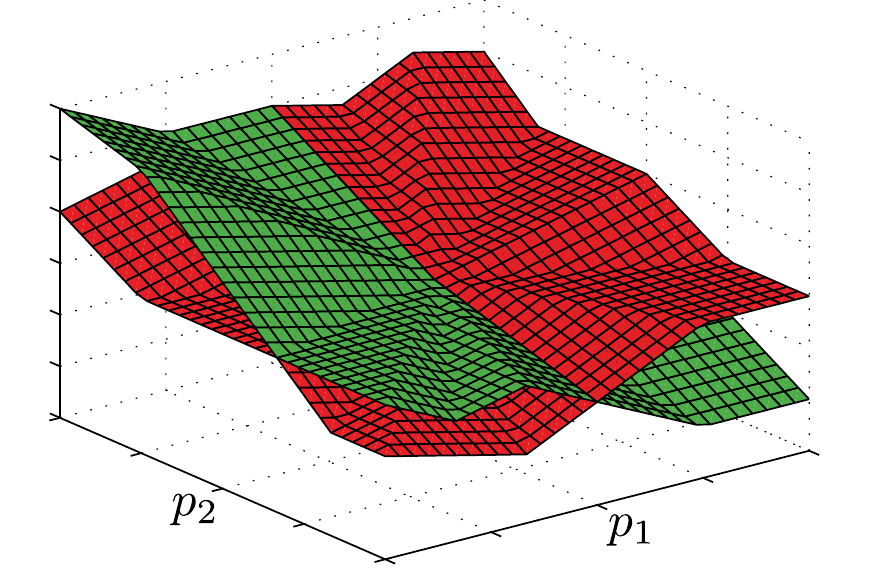}
\caption{ A low--dimensional projection of the two 6--dimensional
  spaces representing the supports $\calX$ (red) and $\calY$
  (green). Note that both are formed by the union of many subsets of
  2--dimensional subspaces, one for each permutation. Moreover, the
  intersection between the two sets has measure zero.}\vspace{-5mm}
  \label{fig:perm_set}
\end{figure}
\begin{corollary}\label{cor:perms}
  The set of supports for a signal $f(x)$ that generate a turnpike
  problem without a unique solution is described as,
\begin{align}
\mathcal{S}=\calX\cup\calY=\operatorname{span}(\mat{\Pi}_{{\calY},\vp}\mQ_{\calY}) \cup
\operatorname{span}(\mat{\Pi}_{{\calX},\vp}\mQ_{\calX}). \nonumber
\end{align}
\end{corollary}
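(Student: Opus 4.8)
The plan is to interpret Corollary~\ref{cor:perms} as a bookkeeping statement that repackages Corollary~\ref{cor:sig_support} once we account for the fact that the supports $\calX$ and $\calY$ are \emph{sets}, not ordered tuples. Corollary~\ref{cor:sig_support} already pins down, for each parameter $\vp=(p_1,p_2)$, the six coordinate values of a bad support as the entries of $\mQ_{\calX}\vp$ (respectively $\mQ_{\calY}\vp$); what remains is to argue that the collection of all such supports, as $\vp$ ranges over $\R^2$, is exactly the claimed union of spans. So I would not expect any deep new mathematical content here: the work is in making precise what ``the set of supports'' means and checking that the span construction captures it without over- or under-counting.

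First I would fix the convention, made explicit in the paragraph preceding the statement, that a support is recorded as the \emph{sorted} vector of its elements, so that the map from a set of six reals to a point in $\R^6$ is well defined. For a fixed direction of $\vp$, the linear image $\mQ_{\calX}\vp$ produces six numbers in a particular (permutation-dependent) order; applying the sorting permutation $\mat{\Pi}_{\calX,\vp}$ gives the canonical sorted representative $\mat{\Pi}_{\calX,\vp}\mQ_{\calX}\vp$. As noted in the text, this permutation depends only on the direction of $\vp$ and not its magnitude, so as $\vp$ sweeps a fixed open cone on which the ordering is constant, $\mat{\Pi}_{\calX,\vp}$ is a single fixed permutation matrix $\mat{\Pi}$, and the sorted representatives trace out a two-dimensional subspace $\operatorname{span}(\mat{\Pi}\mQ_{\calX})$. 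Taking the union over the finitely many cones (at most $6!$ of them, by the bound already stated) gives $\operatorname{span}(\mat{\Pi}_{\calX,\vp}\mQ_{\calX})$ interpreted as this finite union of permuted subspaces, and likewise for $\calY$. The full bad set $\calS$ is then the union of the $\calX$-part and the $\calY$-part, which is precisely the claimed expression.

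The two inclusions would be handled symmetrically. For ``$\subseteq$'': any support generating a non-unique turnpike problem is, by Theorem~\ref{thm:ACF_support} and hence Corollary~\ref{cor:sig_support}, of the form $\mQ_{\calX}\vp$ or $\mQ_{\calY}\vp$ for some $\vp$, and its sorted representative therefore lies in the appropriate permuted span, so it belongs to $\calS$. For ``$\supseteq$'': any point of $\operatorname{span}(\mat{\Pi}\mQ_{\calX})$ for an admissible permutation $\mat{\Pi}$ is the sorted version of $\mQ_{\calX}\vp$ for a suitable $\vp$, which by Theorem~\ref{thm:ACF_support} is genuinely a non-unique instance, so it is a legitimate bad support. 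I would also note in passing the remark, visible in Figure~\ref{fig:perm_set}, that $\calX\cap\calY$ has measure zero; this is not needed for the equality itself but justifies treating the union as the honest description of the exceptional locus.

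The main obstacle I anticipate is purely definitional rather than computational: one must verify that the sorting operator is consistent across the boundaries between cones, i.e. that on the lower-dimensional set of directions $\vp$ where two entries of $\mQ_{\calX}\vp$ coincide (so the sort order is ambiguous) the representative is still well defined and still lies in the union. Since a tie between entries happens only on a finite union of lines through the origin in $\vp$-space, these boundary directions form a measure-zero set, and on them any tie-breaking rule lands in the closure of both adjacent permuted spans, so the union is unaffected. The only other point requiring care is confirming that $\mat{\Pi}_{\calX,\vp}\mQ_{\calX}$ really has the same column span as $\mQ_{\calX}$ up to the permutation action—immediate since left-multiplication by a permutation matrix is invertible—so that each piece is genuinely a two-dimensional subspace and the geometric picture of Figure~\ref{fig:perm_set} is accurate.
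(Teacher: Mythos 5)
Your proposal is correct and matches the paper's own treatment: the paper gives no separate proof of Corollary~\ref{cor:perms}, which is presented as an immediate repackaging of Theorem~\ref{thm:ACF_support} and Corollary~\ref{cor:sig_support} via the sorting permutations $\mat{\Pi}_{\calX,\vp}$, $\mat{\Pi}_{\calY,\vp}$ (constant on cones of directions of $\vp$, finitely many in number) introduced in the preceding paragraphs. Your write-up simply makes that implicit argument explicit, including the two inclusions and the tie-breaking on cone boundaries, which the paper glosses over.
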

A graphical representation of this set of elements is given in Figure
\ref{fig:perm_set}, where we observe the subset of linear subspaces
\emph{mixed} by the permutations.

This geometrical intuition is useful for two reasons:
\begin{itemize}
\item if we have no collisions and unless $N=6$, we can always recover
  uniquely the support of the signal from the support of the ACF,
\item even if $N=6$, the supports without a unique recovery lie on the
  2--dimensional manifold defined in Corollary \ref{cor:perms}, and
  this manifold has measure zero in the set of all the supports of 6
  elements.
\end{itemize}
Note that we have not used the coefficient information so far. The
following theorem merges the previous results in terms of phase
retrieval for sparse signals and considers the coefficients of the ACF
to obtain a sufficient condition for the uniqueness of 1--dimensional
sparse PR problems.

\begin{theorem} Assume we measure the 1--dimensional ACF of a signal
  with $N$ deltas and the elements of the ACF have no
  collisions. Then,
\begin{itemize}
\item If $N\neq 6$, the PR problem has a unique solution.
\item If $N=6$ and not all the $c^{(n)}$ have the same value, the PR
problem has a unique solution.
\item If $N=6$ and all the $c^{(n)}$ have the same value, the PR
problem has almost surely a unique solution.
\end{itemize}
\label{thm:unique_1}
\end{theorem}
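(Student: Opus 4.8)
The plan is to decouple the recovery of the support $\set{x^{(n)}}$ from the recovery of the coefficients $\set{c^{(n)}}$, and to handle all three bullets through the single dichotomy of whether the turnpike problem has a unique solution. First I would note that, under the no--collision hypothesis, the support of $a(x)$ is exactly the difference set $\mathcal{D}$ now \emph{without} repeated elements, so that recovering $\set{x^{(n)}}$ from $\operatorname{supp}(a)$ is precisely the turnpike problem of Problem \ref{prob:rec_points}. The coefficients are then read off afterwards: since there are no collisions, each nonzero lag of the ACF is produced by a single ordered pair, so the value observed at lag $x^{(i)}-x^{(j)}$ is exactly the product $c^{(i)}c^{(j)}$.

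For the first bullet ($N\neq 6$), Theorem \ref{thm:ACF_support} guarantees that a repetition--free turnpike instance has a unique solution whenever $N\neq 6$, so the support is pinned down up to the translation/reflection freedom of the class \eqref{eq:class}. Once the support is known, reading the pairwise products $c^{(i)}c^{(j)}$ over the (complete, hence connected) set of index pairs determines every $c^{(n)}$ up to a single global sign, and that sign is itself absorbed by the $\pm$ in \eqref{eq:class}; reflection and sign change leave the ACF invariant, so this is consistent with the equivalence class. This settles the case $N\neq 6$.

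For $N=6$ I would argue by contradiction: suppose two signals $f,g$ in distinct equivalence classes share the same ACF. If their supports coincide up to translation and reflection, the coefficient--recovery step above forces $f=\pm g(\vk\pm\,\cdot)$, a contradiction; hence the supports genuinely differ and, by Theorem \ref{thm:ACF_support}, must be the homometric pair $\calX,\calY$ of the Bekir family with explicit index structure given in Corollary \ref{cor:sig_support}. The no--collision property provides a canonical bijection between the $\binom{6}{2}=15$ positive lags of the two ACFs, and equality of the ACFs becomes a system matching, lag by lag, the $\calX$--product $c^{(i)}c^{(j)}$ with the $\calY$--product $c'^{(k)}c'^{(l)}$. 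The crux, and the step I expect to be the main obstacle, is to show that this product--matching system is consistent \emph{only} when all six coefficients of $f$ (and all six of $g$) are equal. Passing to logarithms of the absolute products turns the $15$ constraints into a linear system in the log--coefficients whose solution space I would verify, using the explicit entries of $\mQ_\calX$ and $\mQ_\calY$, to collapse onto the all--equal configuration, with the $\pm$ signs handled separately through the signs of the matched products. This proves bullets two and three at once: if the $c^{(n)}$ are not all equal the alternative $\calY$--solution cannot exist (uniqueness), whereas if they are all equal both solutions genuinely occur.

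Finally, for the almost--sure statement when all $c^{(n)}$ are equal, I would invoke Corollary \ref{cor:perms}: the supports admitting a distinct homometric partner form $\mathcal{S}=\operatorname{span}(\mat{\Pi}_{\calY,\vp}\mQ_\calY)\cup\operatorname{span}(\mat{\Pi}_{\calX,\vp}\mQ_\calX)$, a finite union of $2$--dimensional subspaces sitting inside the $5$--dimensional space of $6$--point supports modulo translation, hence a set of Lebesgue measure zero. A randomly drawn support therefore avoids $\mathcal{S}$ with probability one, and off $\mathcal{S}$ the turnpike solution is unique; combined with the (equal, hence determined) coefficients this yields almost--sure uniqueness.
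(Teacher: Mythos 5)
Your proposal follows essentially the same route as the paper's proof: support recovery via Bekir's theorem (Theorem \ref{thm:ACF_support}), the measure-zero argument via Corollary \ref{cor:perms} for the all-equal $N=6$ case, and a logarithmic linearization of the pairwise-product constraints---verified on the explicit counterexample family---to show that the all-equal coefficient configuration is the only ambiguous one. The sole cosmetic difference is in coefficient recovery: you determine the $c^{(n)}$ up to a global sign (absorbed by the equivalence class \eqref{eq:class}) by taking ratios of the products $c^{(i)}c^{(j)}$, whereas the paper inverts a rank-one-plus-diagonal matrix via the matrix inversion lemma (Appendix \ref{sec:ampl}); both are valid.
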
\noindent
The proof is given in Appendix \ref{app:1}. The three cases are due to
the parametric family of turnpike problems without a unique solution
previously described and the additional information that may be
available from the coefficients $c^{(n)}$ of the deltas of the ACF.

\subsection{Uniqueness condition: collision-free $D$--dimensional ACFs}
\label{sec:multi}
The previous analysis applies only to 1--dimensional signals, that is
$D=1$. Senechal et al. \cite{Senechal:2008ds} proposed an analysis of
the uniqueness of the turnpike problem in higher dimensions, $D\ge
2$. Unfortunately, their result is too conservative and cannot cover
very simple examples such as 1D sets of points embedded in higher
dimensional spaces.

In what follows, we describe the result of Senechal et al. and propose
a sufficient condition for the uniqueness of the PR for
multi--dimensional sparse signals.

\begin{definition} [Visible Point] A point belonging to a multi--dimensional
  difference set, or equivalently a delta of the ACF, is
  \emph{visible} if the line through the origin and the point contains
  only the origin, the point itself and another point in the
  centro--symmetric position w.r.t. the origin.
\end{definition}

Note that there is always a point in a centro--symmetric position, see
\eqref{eq:ACF_sparse_L}. In Figure \ref{fig:vis_points}, we show an
ACF with some visible deltas, such as $\vy^{(j)}$, and some deltas
that are not visible, such as $\vy^{(n)}$.
\begin{figure}[t!]  \centering
\includegraphics[scale=0.85]{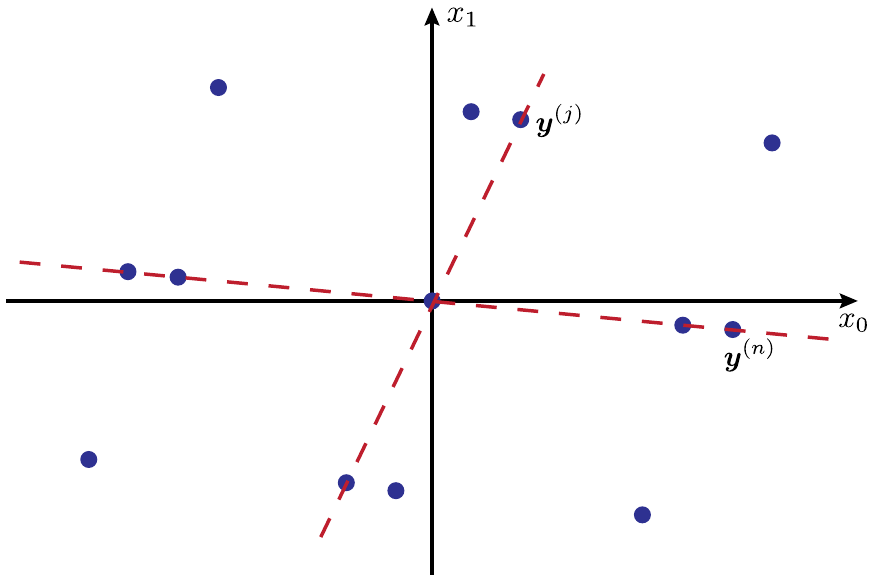} \vspace{-3mm}
\caption{A graphical representation of a 2--dimensional ACF generated
  by a $4$--sparse signal, where the blue dots represent the location
  of the deltas. Note that the delta located at $\vy^{(j)}$
  is visible while the one located at $\vy^{(n)}$ is not, since there
  is another delta aligned on the line passing through the origin.  }\vspace{-5mm}
  \label{fig:vis_points}
\end{figure}

\begin{definition} [General Position] A signal $f(\vx)$ composed of $N$ deltas has its
elements in \emph{general position} if every delta of the ACF is
visible.
\end{definition} Senechal et al. \cite{Senechal:2008ds} showed that
having the points in general position is a sufficient condition to
recover uniquely the support of a sparse signal $f(\vx)$ from the
support of its ACF.
\begin{theorem} {\bf \cite{Senechal:2008ds}} Let the signal $f(\vx)$ be
  supported on $\set{\vx^{(n)}}_{n=1}^N$ in $\R^D$ and let the deltas
  of its ACF be in general position. Then, we can uniquely recover the
  support of $f(\vx)$ from its ACF.
\label{thm:unique_2}
\end{theorem}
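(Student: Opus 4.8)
The plan is to recast the claim as a statement about homometric sets and then reconstruct the support by a deterministic, direction-ordered sweep in which visibility removes all ambiguity. Recovering the support of $f(\vx)$ from the support of its ACF is exactly the problem of recovering the point set $\calX=\set{\vx^{(n)}}_{n=1}^N$ from its difference set $\mathcal{D}=\calX-\calX$, up to the symmetries of \eqref{eq:class} (translation and central reflection). By Definition~\ref{def:homometric}, non-uniqueness means the existence of a set $\calY$ that is homometric to $\calX$ but not related to it by those symmetries; so I would prove the contrapositive, showing that general position forbids such a $\calY$.

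First I would normalise the configuration. Fixing a generic direction $\vu\in\R^D$, the linear functional $\inprod{\vu,\cdot}$ separates the points, and the $\vu$-extreme element of $\mathcal{D}$ is the diameter $\vx^{(a)}-\vx^{(b)}$ between the two $\vu$-extreme points of $\calX$; this difference is visible, hence read off unambiguously. Using the translation and reflection freedom I place $\vx^{(a)}$ at the origin and orient $\calX$ so that this diameter is fixed. I would then reconstruct the remaining points inductively in order of decreasing $\inprod{\vu,\cdot}$: at each step the largest still-unexplained element of $\mathcal{D}$ must join one of the two current extreme anchors to a new point, producing (as in the one-dimensional turnpike) two candidate placements; I would accept a candidate only if every difference it creates with the already-placed points lies in $\mathcal{D}$.

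The role of general position is precisely to make this acceptance test decide a single branch. Because every element of $\mathcal{D}$ is visible, no two differences are collinear through the origin beyond the forced centro-symmetric pair of \eqref{eq:ACF_sparse_L}; consequently a spurious placement would introduce a difference parallel to one already present, which cannot occur. This is the geometric mechanism that kills the left/right ambiguity responsible for the Bloom-type counterexamples tolerated in the one-dimensional case of \thref{unique_1}: along any single line through the origin visibility leaves only the two endpoints, so all the genuine combinatorial content is angular and is resolved by the distinctness of directions. Once the full ``star'' of differences emanating from the origin is fixed, it determines $\calX$ up to \eqref{eq:class}, and centro-symmetry guarantees that it accounts for all of $\mathcal{D}$.

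The hard part will be making the induction of the second paragraph rigorous: proving that the consistency test admits at most one branch, i.e.\ that a competing homometric $\calY$ cannot survive. The crux is to show that any difference between the disagreeing placements in $\calX$ and $\calY$ would be forced to coincide in direction with an existing element of $\mathcal{D}$, contradicting visibility; handling the bookkeeping of multiplicities, and the coincident differences permitted by the letter of the definition, while keeping the extreme anchors well defined at every step, is where the argument will require the most care.
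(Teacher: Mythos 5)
First, a point of reference: the paper does not actually prove Theorem~\ref{thm:unique_2}; it is quoted verbatim from Senechal \cite{Senechal:2008ds}, and the paper's own contribution (Theorem~\ref{thm:md}) is built on top of it. So your proposal cannot be compared against an internal proof and must stand on its own. It does not: the gap is exactly where you place it yourself, namely the claim that the consistency test of your turnpike-style sweep ``admits at most one branch.'' That claim is the entire content of the theorem, and the mechanism you offer for it --- a spurious placement would create a difference \emph{parallel} to an existing one, contradicting visibility --- is demonstrably not the right one.

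Here is why it fails. Let $A,B$ be your two $\vu$-extreme anchors and let the true new point be $p$, so that $\vd = A - p$ is the largest unexplained element of $\mathcal{D}$. The competing candidate is $q = B + \vd = A + B - p$, the central reflection of $p$ through $(A+B)/2$. Then $q - B = A - p$ and $q - A = B - p$, i.e.\ both differences created by the spurious candidate with the anchors are \emph{genuine elements} of $\mathcal{D}$ (realized in $\calX$ by the pairs $(A,p)$ and $(B,p)$); they are equal to existing elements, not illegally parallel to them, so visibility raises no objection and the spurious branch always survives the anchor test. The ambiguity can only be broken by the other placed points $z$, where one must show $q - z = (A-p)+(B-z) \notin \mathcal{D}$. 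This is an \emph{additive} condition, not a parallelism condition, and general position does not forbid such additive coincidences: the six planar points $(0,0)$, $(0.3,0.9)$, $(1.7,0.2)$, $(2.1,3.3)$, $(-1.2,1.5)$, $(4.7,1.1)$ have all fifteen pairwise differences mutually non-parallel (so every delta of the ACF is visible), and yet $\bigl((4.7,1.1)-(0,0)\bigr) + \bigl((0.3,0.9)-(1.7,0.2)\bigr) = (2.1,3.3)-(-1.2,1.5)$, a sum of differences of two disjoint pairs equal to the difference of a third disjoint pair. This does not contradict the theorem, but it kills the one-line argument you rely on: visibility used locally cannot rule out the coincidences your induction needs to exclude, so the hypothesis must be exploited globally, which is what Senechal's actual argument does. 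A further symptom of the same confusion is your closing remark about ``differences between the disagreeing placements in $\calX$ and $\calY$'': cross-set differences are not elements of $\mathcal{D}$ at all, so visibility constrains nothing about them.
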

For example, Theorem \ref{thm:unique_2} does not guarantee a unique
recovery of the signal generating the ACF given in Figure
\ref{fig:vis_points}.  Note that while the proposed uniqueness
condition for the one--dimensional PR problem described in Theorem
\ref{thm:unique_1} and the one for multi--dimensional ones given in
Theorem \ref{thm:unique_2} are both sufficient, the latter is fairly
constraining. For example, if we pick a 1--dimensional sets of points
that generates a turnpike problem with a unique solution and embed it
into a higher dimensional domain, then Theorem \ref{thm:unique_2}
cannot guarantee anymore the uniqueness of the solution.

This example inspired the idea of solving a $D$--dimensional PR as a
set of multiple $1$--dimensional PR problems. In fact, if we can solve
uniquely the sub--problems, then the original problem may have a
unique solution.  In this section, we show that this divide and
conquer strategy leads to a tighter necessary and sufficient
condition: \emph{given an ACF $a(\vx)$ of a $N$--sparse
  $D$--dimensional signal $f(\vx)$, the PR problem has a unique
  solution}.

We show the result as follows: first we consider a set of projections
of the ACF to many 1--dimensional subspaces\footnote{It is possible to
  consider projections onto subspaces of any dimensionality. Here, we
  consider only 1--dimensional projections for simplicity of
  notation.}.  See Figure \ref{fig:proj} for an example of a projection
over a subspace defined by a vector $\mP$. Second, we show that the
projection of the ACF is the ACF of the projected signal. Third, we
show that a finite number of different projections is necessary and
sufficient to recover the deltas of $f(\vx)$. We conclude showing that
we can find these projections for every $N$-sparse $f(\vx)$ that is
embedded in a $D$--dimensional space, with $D\ge2$.

\begin{figure}[t!]  \centering
\includegraphics[scale=0.85]{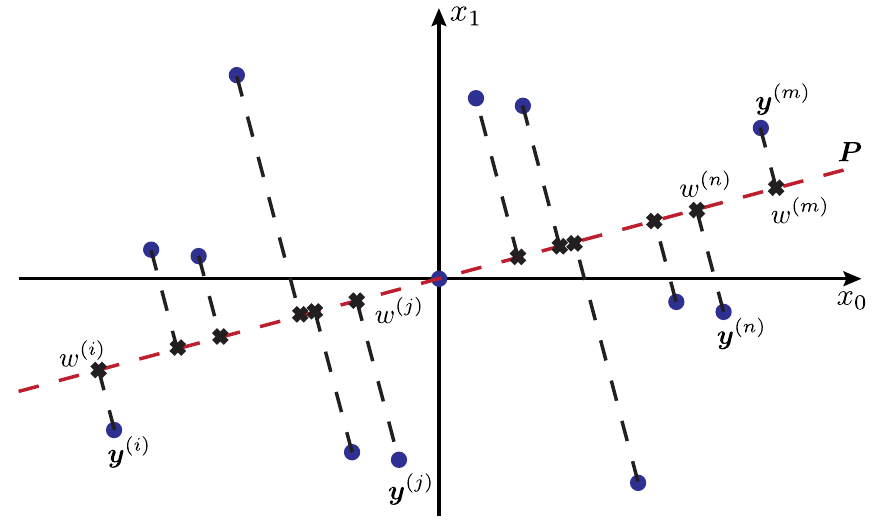} \vspace{-3mm}
\caption{The projection of a 2--dimensional ACF onto a 1--dimensional
subspace defined by a vector $\mP$. Note that the main properties of
an ACF, such as the centro--symmetry, are preserved in the projected domain.}\vspace{-5mm}
  \label{fig:proj}
\end{figure}

\begin{proposition} 
  Let $f(\vx)$ be a signal composed of $N$ deltas lying on $\R^D$ and
  let $a(\vx)$ be its ACF. Define a projection over a 1--dimensional
  space as $\mP\vx$, where $\mP$ is a $1\times D$ vector over which we
  project the spatial domain and $\vx$ are the coordinates in the
  original domain. Then the projected ACF is the ACF of the projected
  signal.
\label{prop:projection}
\end{proposition}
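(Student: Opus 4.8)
The proposition states: if $f(\vx) = \sum_{n=1}^N c^{(n)} \delta(\vx - \vx^{(n)})$ lives on $\R^D$ with ACF $a(\vx)$, and we project onto a 1D subspace via the $1\times D$ vector $\mP$ (so a point $\vx$ maps to the scalar $\mP\vx$), then the "projected ACF" equals the ACF of the "projected signal."

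**What does "projected signal" mean?**

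The projected signal should be
$$g(t) = \sum_{n=1}^N c^{(n)} \delta(t - \mP\vx^{(n)}), \quad t \in \R.$$

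Each Dirac gets mapped to its scalar projection location. (This is integration of $f$ over the hyperplanes $\{\mP\vx = t\}$ — a Radon-type projection.)

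**What is the "projected ACF"?**

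Similarly, project each delta of $a(\vx)$ (located at differences $\vx^{(m)} - \vx^{(n)}$) onto the line:
$$\tilde{a}(t) = \sum_{n,m} c^{(n)}c^{(m)} \delta\bigl(t - \mP(\vx^{(m)} - \vx^{(n)})\bigr).$$

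**The claim, made concrete.**

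I need to show: the ACF of $g$ equals $\tilde a$. Let me just compute.

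Let me think about what the ACF of $g$ is.
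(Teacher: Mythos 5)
Your proposal sets up the correct objects---the projected signal $g(t)=\sum_{n}c^{(n)}\delta(t-\mP\vx^{(n)})$ and the projected ACF $\tilde a(t)=\sum_{n,m}c^{(n)}c^{(m)}\delta\bigl(t-\mP(\vx^{(m)}-\vx^{(n)})\bigr)$, both matching what the paper means by these terms---but it stops exactly where the proof has to begin. The announced computation of the ACF of $g$ is never carried out, so nothing is actually established: the entire content of the proposition \emph{is} the verification that these two trains of deltas coincide. As submitted, the proposal is a restatement of the claim in concrete coordinates, not a proof of it.

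The missing step is short but it is the whole argument. Writing the ACF of $g$ as a correlation integral and using the sifting property of the deltas,
\begin{align}
a_g(s)&=\int_{\R} g(h)\,g(h+s)\,dh \nonumber\\
&=\sum_{n,m} c^{(n)}c^{(m)}\,\delta\bigl(s-(\mP\vx^{(m)}-\mP\vx^{(n)})\bigr),\nonumber
\end{align}
one then needs the single structural fact that makes the proposition true: $\mP$ acts linearly, so $\mP\vx^{(m)}-\mP\vx^{(n)}=\mP(\vx^{(m)}-\vx^{(n)})$, hence every delta of $a_g$ sits at the projection of the corresponding difference vector and $a_g=\tilde a$. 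This is precisely what the paper's appendix proof does (it computes the ACF of the projected signal as a convolution, computes the projection of the ACF as an integral against $\delta(s-\mP\vy)$, and observes that both reduce to $\sum_n d^{(n)}\delta(s-\mP\vy^{(n)})$), so finishing your computation would have reproduced the paper's argument; but without it the proof is incomplete.
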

\noindent The proof is given in Appendix \ref{app:2}.

Given Proposition \ref{prop:projection}, we can project a
$D$--dimensional ACF $a(\vx)$ onto $P$ different 1--dimensional
subspaces. If the projected ACF satisfies the conditions given in
Theorem \ref{thm:unique_1}, we recover the projected signal, namely
the deltas with the proper coefficient and the projected locations. As
shown in \cite{Shukla:2006em}, we need $(N+1)^{(D-1)}$ projections to
reconstruct exactly the location of $N$ deltas in a $D$--dimensional
space. It is possible to reduce the number of required projection to
$D+1$ accepting to take random projections and exactly recovering the
support \emph{with probability one}.

Finally, we show the existence of the projections with a unique PR for
any $D$--dimensional ACF with $D\ge2$. More precisely, a projection
may unfortunately belong to the parametric family described in Theorem
\ref{thm:ACF_support} or it can have collisions in the projected
points. In what follows, we show that the projections without
collisions and with a unique PR exist and are easy to find for all the
$D$--dimensional signals with $D\ge 2$.  Indeed, it is simple to show
that for any ACF without collisions, the projected ACF on a random
subspace has no collisions with probability one. In the following
theorem, we show that the set of projections with a unique PR does
not have measure zero in the set of all projections for any ACF with
$D\ge 2$. In other words, the multi--dimensional sparse PR problem
always has a unique solution.

\begin{theorem}
  Let $f(\vx)$ be a $D$--dimensional $N$--sparse signal and let
  $a(\vx)$ be its ACF. Assume that $a(\vx)$ has no collisions and
  $D\ge2$. Then the set of projections onto 1--dimensional domains
  that generates a 1--dimensional PR problem with a unique solution
  has a measure larger than zero in the set of all the possible
  projections. Therefore, the solution to the PR of $f(x)$ is always
  unique.
\label{thm:md}
\end{theorem}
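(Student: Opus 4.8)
The plan is to parametrize the projections by their direction and show that the ``bad'' directions---those for which the projected $1$--dimensional problem fails to have a unique solution---form a set of measure zero, so that the good directions fill the remaining positive (indeed full) measure. I would identify each $1\times D$ projection $\mP$ with a point of the unit sphere $\S^{D-1}$, since rescaling $\mP$ only rescales the projected locations and does not affect uniqueness. By Proposition~\ref{prop:projection} the projected ACF is the ACF of the projected signal, and the projected signal $\sum_n c^{(n)}\delta(x-\mP\vx^{(n)})$ keeps the original coefficients $c^{(n)}$ and, for generic $\mP$, the original number $N$ of deltas; hence a direction is good as soon as the projected support meets the hypotheses of Theorem~\ref{thm:unique_1}. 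Once positive measure of good directions is established, I would argue uniqueness as follows: any two signals sharing the ACF $a(\vx)$ have, by Proposition~\ref{prop:projection}, identical projected ACFs in every direction, so along the good directions their projected signals coincide up to the $1$--dimensional equivalence class; by \cite{Shukla:2006em} a fixed finite number of such coincidences---$D+1$ generic good directions suffice---determines the support and coefficients of $f(\vx)$, forcing the two signals into the same $D$--dimensional equivalence class.

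Next I would dispose of collisions. A projection introduces a collision in the projected ACF exactly when $\mP\left(\vy^{(n)}-\vy^{(m)}\right)=0$ for some $n\neq m$, and two signal deltas merge only when $\mP$ annihilates some difference $\vx^{(n)}-\vx^{(m)}$. Because the $D$--dimensional ACF is collision-free every such vector is nonzero, so each condition confines $\mP$ to a single hyperplane through the origin; intersected with $\S^{D-1}$ this is a great subsphere of measure zero, and there are finitely many pairs. For $N\neq 6$, or for $N=6$ with the $c^{(n)}$ not all equal, Theorem~\ref{thm:unique_1} already makes every collision-free projection unique, so the good set is $\S^{D-1}$ minus a null set and the claim is immediate.

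The genuinely hard case is $N=6$ with all coefficients equal, where I must also show that the directions whose projected support lands in the Bekir family of Theorem~\ref{thm:ACF_support} are negligible. Here I would use Corollary~\ref{cor:perms}: the non-unique supports lie in $\calS=\bigcup_{\mPi}\operatorname{span}(\mPi\mQ_{\calX})\cup\operatorname{span}(\mPi\mQ_{\calY})$, a \emph{finite} union of $2$--dimensional linear subspaces of $\R^{6}$. Writing $\mX=\left[\vx^{(1)}\ \cdots\ \vx^{(6)}\right]$, the map $\mP\mapsto\mP\mX\in\R^{6}$ is linear, so the preimage of each such $2$--plane $V$ is a linear subspace of the $\mP$--space, and it is a \emph{proper} subspace---hence null on $\S^{D-1}$---unless the row space of $\mX$ is entirely contained in $V$. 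Summing the finitely many null preimages gives that the Bekir directions are negligible and the good set again has full measure.

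The main obstacle is precisely this degenerate alignment, in which the coordinate (row) space of $\mX$ sits inside one of the Bekir $2$--planes $V$: then \emph{every} projection is a Bekir configuration and no single projection certifies anything. I would address it by noting that such alignment forces $\operatorname{rank}\mX\le 2$ and makes $\mX$ a linear image of the fixed six-point Bekir prototype (the rows of $\mQ_{\calX}$). I would then check that either this prototype already yields a repeated difference, contradicting the standing collision-free hypothesis on the full $D$--dimensional ACF and rendering the case vacuous, or else the joint ($D$--dimensional) difference set still separates the two homometric partners, so $f(\vx)$ is unique even when no $1$--dimensional projection witnesses it. Ruling out this alignment---equivalently, showing it is incompatible with a collision-free $D$--dimensional ACF for $D\ge 2$---is the crux; the remaining measure-theoretic bookkeeping is routine.
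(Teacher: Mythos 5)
Your skeleton is the same as the paper's: parametrize projections by direction, use Proposition~\ref{prop:projection} to commute projection with autocorrelation, dispose of projection-induced collisions via finitely many hyperplanes (the paper makes the same remark just before the theorem), invoke Theorem~\ref{thm:unique_1} on collision-free projections, and reduce the residual danger to the Bekir manifold of Corollary~\ref{cor:perms} by a dimension count; your preimage-of-a-$2$-plane argument is the paper's computation of $\operatorname{span}(\mY)\cap\operatorname{span}(\mQ)$ in different clothes. The divergence is the degenerate alignment you flag as ``the crux'': the row space of $\mX$ contained in a Bekir $2$--plane, so that \emph{every} direction projects onto a Bekir configuration. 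The paper does meet this case (its bullet where the intersection is $2$--dimensional) and dismisses it by asserting that changing $\vp$ changes the sorting permutation $\mat{\Pi}_\vp$ and so moves the projected support off the bad manifold. You leave the case open, and that is a genuine gap, not routine bookkeeping.

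Moreover, the gap cannot be closed along either of the lines you suggest, because the degenerate case contains an actual counterexample to the statement. Take $D=2$, all coefficients equal, $f$ supported on the rows of $\mQ_\calX$, namely $(0,0),(1,0),(-2,1),(-2,2),(0,2),(-1,3)$, and $g$ supported on the rows of $\mQ_\calY$, namely $(0,0),(1,0),(2,1),(1,2),(-1,2),(-1,3)$. By construction, for every direction $\vp$ the projections of $f$ and $g$ are exactly the homometric pair $\calX(\vp),\calY(\vp)$ of Theorem~\ref{thm:ACF_support}, and indeed the two planar difference sets coincide: direct enumeration shows both (origin excluded) equal the $30$ distinct vectors $\pm(1,0)$, $\pm(0,1)$, $\pm(2,0)$, $\pm(0,2)$, $\pm(1,1)$, $\pm(-1,1)$, $\pm(2,1)$, $\pm(-2,1)$, $\pm(1,2)$, $\pm(-1,2)$, $\pm(-2,2)$, $\pm(-3,1)$, $\pm(-3,2)$, $\pm(-1,3)$, $\pm(-2,3)$. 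Since these $30$ vectors are distinct, both ACFs are collision-free, so your first escape (a forced repeated difference) is unavailable; and since the joint difference sets agree, your second escape (the $D$--dimensional data separating the partners) fails as well. Finally, $g$ is not equivalent to $f$ under \eqref{eq:class}: matching centroids forces $\vk=(1,0)$ for a translation, which fails because $(2,0)$ lies in the shifted support of $f$ but not in that of $g$, and forces the non-integer $\vk=(-1/3,8/3)$ for an inversion; in fact $g$ is a mirror image of $f$ across the line $y=x$, a transformation outside the equivalence class. So for this collision-free signal essentially no direction is ``good,'' the paper's permutation-switching argument breaks down on the very same example (changing $\vp$ merely reparametrizes the family), and the conclusion of Theorem~\ref{thm:md} fails. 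Your instinct that this alignment is the crux is exactly right; what it reveals is a hole in the theorem, not merely in your proof.
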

\noindent The proof is given in Appendix \ref{app:3}.

We conclude this section underlying that the uniqueness of the
solution of the sparse PR problem does not depend on the type of
domain---that is continuous or discrete. The presence of collisions is
a better characterizing feature.  However, the amount of collisions
depends on the domain: if we randomly distribute the deltas of
$f(\vx)$ on a continuous domain, we have almost surely no
collision, while the probability of collisions for discrete signals is
always larger than zero for $N\ge4$.

\section{Conclusions}

As we described in Section \ref{sec:applications}, many real-world
problems require the recovery a sparse signal on a continuous domain
from the magnitude of its Fourier transform. In this work, we provided
an answer to the uniqueness of the solution of such problem, whether
defined on a continuous or a discrete domain.

In particular, we showed that the uniqueness of the solution of such
problem depends on the presence of collisions. More precisely, the
solution is always unique for signals that are embedded on 2 or more
dimensions and whose ACFs do not have collisions. For 1--dimensional
signals, we showed the uniqueness for most of the signals with a
collision-free ACF, the only counterexample being signals composed of
$N=6$ elements with the same coefficient supported on a set of points
belonging to the unique family of counterexamples \cite{Bekir:2007kx}.

Future work will be focused on developing algorithms to solve the
1--dimensional PR problem on the continuous domain and its extension
to the multi--dimensional setup using multiple projections on
different 1--dimensional domains.

\appendix

\begin{figure*}[t!]  \centering
  \subfloat{\includegraphics[scale=0.85]{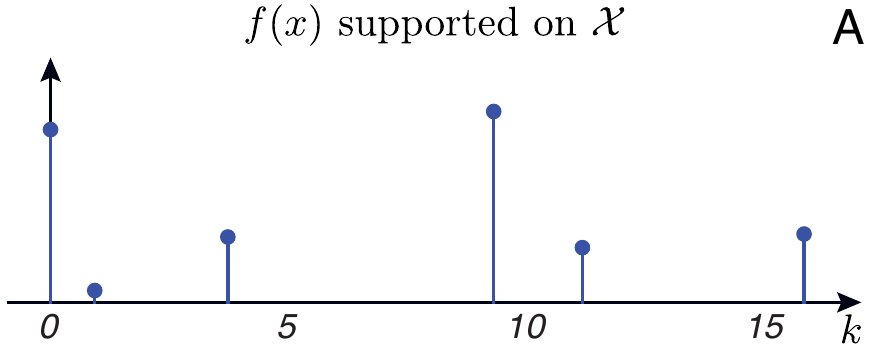}}
  \qquad
  \subfloat{\includegraphics[scale=0.85]{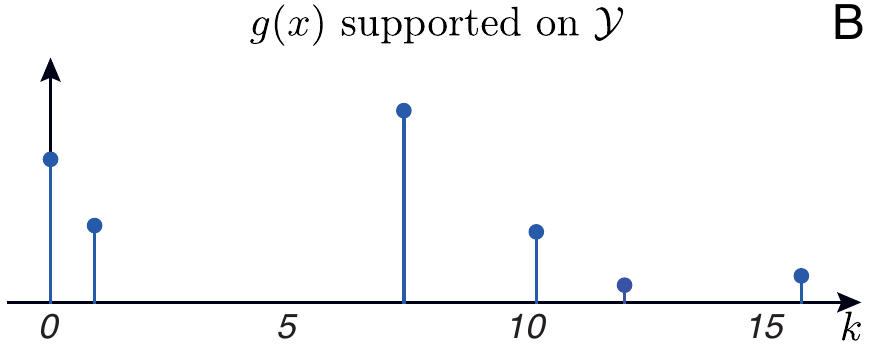}}\vspace{-0.1cm}\\ 
  \subfloat{\includegraphics[scale=0.85]{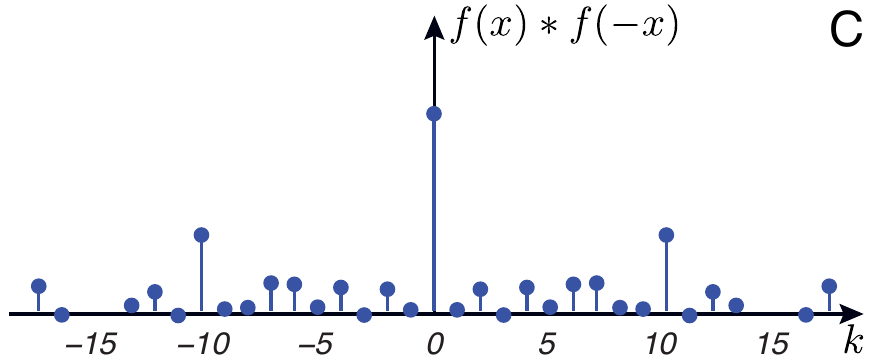}}
  \qquad
  \subfloat{\includegraphics[scale=0.85]{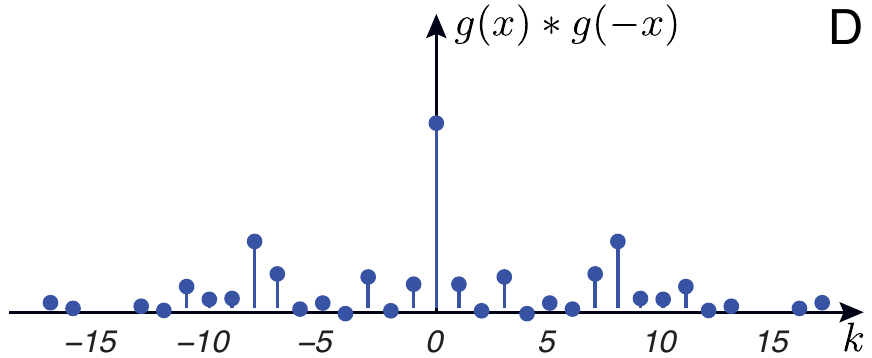}}\vspace{-2mm}
  \caption{ In (A) and (B), two random signals supported on $\cal{X}$
    and $\cal{Y}$ are shown. In (C) and (D), the respective ACF are
    given.  Note that even if the supports of the ACFs are same, the
    coefficients of the ACF depend significantly on the coefficients of
    the original signals. Here we considered as supports the first
    counterexample found by Bloom \cite{Bloom:1975vq} but an
    equivalent analysis can be done for the family of counterexamples
    described by Bekir et al. \cite{Bekir:2007kx}.}\vspace{-5mm}
  \label{fig:ex_domain}
\end{figure*}

\subsection{Proof of Theorem \ref{thm:unique_1}}
\label{app:1}
\begin{IEEEproof} We divide the proof in three parts, one for each
case given in the theorem's statement.

$\bullet$ For $N\neq 6$ we can always recover uniquely the locations of
the deltas from the sets of differences, see Theorem
\ref{thm:ACF_support}. Once the locations are known, the coefficients
are uniquely determined, see Appendix \ref{sec:ampl}. 

$\bullet$ If $N=6$ and all the $c^{(n)}$ have the same coefficient, and
the signal does not have a unique PR, then its support lies on the
manifold defined in Corollary \ref{cor:perms} and we cannot use the
coefficients to enforce uniqueness. Note that all possible signals just
$N=6$ span a 6--dimensional space, representing the six
locations. Given that the manifold containing the signals without a
unique reconstruction is 2--dimensional, then the set of these signals
has measure zero w.r.t. the set of all the signals with $N=6$.

$\bullet$ If $N=6$ and not all the $c^{(n)}$ have the same coefficient,
there is always only one set between $\mathcal{X}$ and $\mathcal{Y}$
that is a possible support of the signal $f(x)$.  To prove this
statement, we assume without loss of generality\footnote{If the deltas
  of the $f(x)$ are not positive, we can always take the absolute
  value given the absence of collisions by assumption. } that all
$d^{(n)}$ are positive and we would like to show that it is always
possible to discern the support of $f(x)$ between the two possibile
ones using the coefficients.

First, define two vectors $\vec{q}$ and $\vec{r}$ containing the
logarithms of the coefficients of the ACF and of the signal,
respectively. Then, given the absence of collision and the
linearization introduced by the logarithm we can write two systems of
equations. Each equation represents the coefficient of one element of
the ACF given the coefficients of two elements of the original signal,
see Figure \ref{fig:ex_domain}. The two systems assume that $f(x)$ is
supported on $\cal{X}$ and $\cal{Y}$, respectively. More precisely, if
we assume that $f(x)$ is supported on $\mathcal{X}$ we have
\begin{align}
\vec{q}=\boldsymbol{C}_\calX\mathbf{r},
\label{eq:syst1}
\end{align}
 while if we consider $\mathcal{Y}$ to be the support of
$f(x)$ we get a different system of equations,
\begin{align}
\vec{q}=\boldsymbol{C}_\calY\mathbf{r}.
\label{eq:syst2}
\end{align}

Successively, we note that if all the coefficients have the same value,
that is $c^{(n)}=C \; \forall\, n=\{1,\ldots,N\}$, all the ACF
elements have the same coefficient as well, $d^{(n)}=C^2$.  In this
case, \eqref{eq:syst1} and \eqref{eq:syst2} are equivalent and we
cannot distinguish between the two possible supports, $\calX$ and
$\calY$. Note that the set of signals having $c^{(n)}=C \; \forall n$
form a 1--dimensional subspace in the coefficient domain. In what
follows, we show that this is the only subspace where $\calX$ and
$\calY$ cannot be distinguished from the coefficients.

Consider the intersection between the two columns spaces of
$\boldsymbol{C}_\calX$ and $\mC_\calY$. This intersection contains all
the coefficients of the ACF that could be equivalently generated by two
signals, one supported on $\calX$ and one on $\calY$. The
dimensionality of this intersection can be computed as
\begin{align}\label{eq:boh}
\operatorname{dim}& (\operatorname{span}(\mC_\calX)\cap
\operatorname{span}(\mC_\calY))= \\\nonumber&
\rank(\boldsymbol{C}_\calX)+\rank(\boldsymbol{C}_\calY)-\rank([\boldsymbol{C}_\calX,\boldsymbol{C}_\calY])=1
\end{align} where $[\boldsymbol{C}_\calX,\boldsymbol{C}_\calY]$ is a
symbol representing the horizontal concatenation of the two matrices
$\boldsymbol{C}_\calX$ and $\boldsymbol{C}_\calY$. It is possible to
verify that \eqref{eq:boh} holds for the example of Bloom and by
linearity to any other element of the set of counterexamples.

We deduce from \eqref{eq:boh} that the ACFs with coefficients
$\set{d^{(n)}}_{n=0}^{N^2-N}$ that can be generated by the two
different supports lie on a 1--dimensional subspace. Given that we
have already characterized this 1--dimensional subspace as the one
where all the $\set{c^{(n)}}_{n=0}^5$ have the same value, this
concludes the proof.
\end{IEEEproof}

\subsection{Proof of Proposition \ref{prop:projection}}
\label{app:2}
\begin{IEEEproof} First, we define the projection $\bar{f}(s)$ of the
  signal $f(\vx)$ over the subspace indicated by $\mP$ as the
  following inner product
\begin{align} \bar{f}(s)=\int_{\R^D}\delta(s-\mP
\vx)f(\vx)d\vx=\sum_{n=1}^Nc^{(n)}\delta(s-\mP\vx^{(n)}).\nonumber
\end{align} Then, we compute the ACF of the projected signal as
\begin{align} &a(s) =\bar{f}(s)* \bar{f}(-s)\nonumber\\
  &=\int_{\R^D} \sum_{n=1}^Nc^{(n)}\delta(h-\mP\vx^{(n)})\sum_{m=1}^Nc^{(m)}\delta(h+s-\mP\vx^{(m)})dh\nonumber \\
  &=\sum_{n=0}^{N^2-N}d^{(n)}\delta(s-\mP\vy^{(n)}),
\label{eq:proj}
\end{align} where the $\delta(s)$ are located on the 1--dimensional
domain defined by the projection.  Finally, we compute the projection
of the ACF according to \eqref{eq:ACF_sparse} as
\begin{align} \bar{a}(s)
&=\int_{\R^D}\delta(s-\mP\vy)a(\vy)d\vy\nonumber \\
&=\int_{\R^D}\delta(s-\mP\vy)\sum_{n=0}^{N^2-N}d^{(n)}\delta(\vy-\vy^{(n)}) d\vy\nonumber\\ 
& =\sum_{n=0}^{N^2-N}d^{(n)}\delta(s-\mP\vy^{(n)}),
\end{align} that is equal to \eqref{eq:proj}, which proves the
proposition.
\end{IEEEproof}

\subsection{Proof of Theorem \ref{thm:md}}
\label{app:3}

\begin{IEEEproof} First, we introduce some notation. The projected
deltas are located on $\{w^{(n)}\}_{n=0}^{N^2-N}$. These locations can
be computed with the following matrix vector multiplications,
\begin{align}\mathbf{w}&=\begin{bmatrix} w^{(0)} \\ w^{(1)} \\ \vdots
    \\ w^{(N^2-N)} \end{bmatrix}\nonumber\\ 
&=\begin{bmatrix} y^{(0)}_0 &
    y^{(0)}_1 & \cdots & y^{(0)}_{D-1} \\ y^{(1)}_0 & y^{(1)}_1 &
    \cdots & y^{(1)}_{D-1} \\ \vdots & \vdots & & \vdots \\
    y^{(N^2-N)}_0 & y^{(N^2-N)}_1 & \cdots & y^{(N^2-N)}_{D-1} \\
\end{bmatrix}\begin{bmatrix} p_0 \\ p_1\\ \vdots \\ p_{D-1}
\end{bmatrix}\nonumber\\&=\mY\mP^{T},
\label{eq:thm_proj}
\end{align}

where $\mY$ is a matrix containing on each row the locations of the
points $\set{\vy^{(n)}}_{n=0}^{N^2-N}$.  Note that the set of all
possible projections from a $D$--dimensional to a $1$--dimensional
space forms a $\rank(\mY)$--dimensional space. More precisely, unless
the points are lying on a lower dimensional subspace, it forms a
$D$--dimensional subspace.

Similarly to the analysis proposed in Section \ref{sec:1d}, we can
geometrically characterize the supports of the ACFs for which we
cannot solve the turnpike problem uniquely.  We build a linear model
$\mQ$ from the difference set induced by $\cal{X}$, or $\cal{Y}$
equivalently. Then, we define a permutation matrix $\Pi_{\vp}$ to
match the ordering given by the linear model with the ordering chosen
for the set of points representing the ACF support.  Assume that the
projected points $\set{w^{(n)}}_{n=0}^{N^2-N}$ are supported on a set
of points without a unique PR. Then, we have
\begin{align}
\vec{w}=\mat{\Pi_{\vp}}\mat{Q}\vp,
\end{align}
where $\mQ$ spans a 2-dimensional linear subspace and when combined
with the $\vp$-dependent permutation $\Pi_{\vp}$, it defines a
2-dimensional manifold.

The permutation is fixed by $\vp$ and we have an intersection between
the $D$--dimensional subspace spanned by $\mY$ and the 2--dimensional
subspace spanned by $\mQ$. We analyze the possible cases w.r.t. the
dimensionality of the original ACF, that is given by $\rank{\mY}$:
\begin{itemize}
\item if the ACF is 1--dimensional and we have an intersection between
  $\operatorname{span}(\mY)$ and $\operatorname{span}(\mQ)$, this
  intersection has forcefully the size of all the possible
  projections $\rank(\mY\cap\mQ)=1$. Changing the direction of the
  projection has no effect beside a scaling. In this case, the original signal is
  not uniquely recoverable from the support, see Theorem
  \ref{thm:ACF_support}.
\item if the ACF is 2--dimensional and the intersection is
  1--dimensional, there is only one projection that does not have a
  unique reconstruction and we can pick any other $\mP$ to have a
  unique PR. On the other hand, if the intersection is 2--dimensional,
  then the signal subspace is the same as the one spanned by the
  matrix $\mQ$.  However, we can change $\vp$ such that the
  permutation $\mat{\Pi}_{\vp}$ changes and the projected points have
  a unique 1--dimensional PR.  Given the structure of the manifold
  induced by the permutations, the \emph{good} projections are
  relatively easy to find.
\item if the ACF is $D$--dimensional and $D>2$, the set of projections
  $\mP$ that generates projected PR problems with a unique solutions
  is dense in the set of the projections. In fact, the set of
  projections without a unique solutions is a 2--dimensional manifold
  while all the projections are a $D$--dimensional set. We always find
  sufficient projections to recover the support of $f(\vx)$ and
  therefore all the $f(\vx)$ without collisions have a unique PR.
\end{itemize} \end{IEEEproof}
\subsection{Recovering the coefficients from the support}
\label{sec:ampl}
In general, the problem of recovering the coefficients
$\{c^{(n)}\}_{n=1}^N$ of a $N$ sparse signal $f(x)$ knowing the
support $\{x^{(n)}\}_{n=1}^N$ and the ACF $a(x)$ is not trivial. It is
equivalent to solve a system of quadratic forms and a possible convex
relaxation is given in \cite{Jaganathan:2012ta}. Unfortunately, it is
not possible to guarantee the success of reconstruction. 

The problem has the same formulation whether the domain is discrete or
continuous. In what follows we derive an algorithm that recovers the
coefficients of the a sparse signal whose support and ACF are both
known while requiring the absence of collision in its ACF.

Let $\mathbf{c}=[c^{(1)},c^{(2)}, \ldots, c^{(N)}]^\top$ be a vector
containing the coefficients of a $N$-sparse signal $f(\vx)$ and define a
rank-one matrix $\mathbf{C}=\mathbf{cc}^*$

Given that we know the ACF and the support of $f(\vx)$, then we know all
the off-diagonal elements of the matrix $\mathbf{C}$. Therefore, we
can reformulate the task of recovering the coefficients from the support
as follows.

\begin{problem}
Consider a rank-one matrix $\mathbf{C}$ whose elements are all known
except the ones on the main diagonal.  Can we uniquely reconstruct the
elements on the main diagonal?
\label{prb:rank1}
\end{problem}

There exists many alternative approaches to solve Problem
\ref{prb:rank1}. In what follows, we describe a method that requires a
single matrix inversion. 

Let $\alpha_k\triangleq |c_k|^2$ $\forall k=1,\ldots,N$, and define a
matrix
$\mathbf{A}=\mathbf{C}-\operatorname{diag}(\alpha_1,\ldots,\alpha_N)$. Note
that $\mathbf{A}$ is known and we aim to recover the matrix
$\mathbf{C}$. The following result describes a method that requires a
single matrix inversion to find the unique solution of Problem \ref{prb:rank1}.

\begin{proposition}
Assume that $N>2$, then $$\mathbf{C}_{k,k}=|c_k|^2=\frac{N-2}{1-
  N}\frac{1}{{\mathbf{A}^{-1}}_{k,k}},$$
where $\mathbf{C}_{k,k}$ and ${\mathbf{A}^{-1}}_{k,k}$ are the $k$-th
element on the main diagonal of $\mathbf{C}$ and $\mathbf{A}^{-1}$,
respectively. 
\end{proposition}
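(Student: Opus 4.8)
The plan is to recognize $\mathbf{A}$ as a rank-one perturbation of an invertible diagonal matrix and to invert it in closed form with the Sherman--Morrison formula, after which the diagonal entries of $\mathbf{A}^{-1}$ can be read off directly. Writing $\mathbf{D}=\operatorname{diag}(\alpha_1,\dots,\alpha_N)$ with $\alpha_k=|c_k|^2$, the definition $\mathbf{A}=\mathbf{C}-\mathbf{D}=\mathbf{c}\mathbf{c}^*-\mathbf{D}$ is precisely a rank-one update of $-\mathbf{D}$; note that $\mathbf{D}$ is exactly the diagonal of $\mathbf{c}\mathbf{c}^*$, so $\mathbf{A}$ is hollow, consistent with it being built only from the known off-diagonal data. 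Since the $c_k$ are the coefficients of the $N$ deltas of $f$, they are nonzero, so $\mathbf{D}$ is invertible with $(\mathbf{D}^{-1})_{k,k}=1/\alpha_k$.

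First I would compute the scalar appearing in the Sherman--Morrison denominator. With base matrix $-\mathbf{D}$ and update vectors $\mathbf{c},\mathbf{c}$, the relevant quantity is
\begin{align}
\mathbf{c}^*(-\mathbf{D})^{-1}\mathbf{c}=-\sum_{k=1}^N \overline{c_k}\,\frac{1}{\alpha_k}\,c_k=-\sum_{k=1}^N\frac{|c_k|^2}{|c_k|^2}=-N,\nonumber
\end{align}
so the Sherman--Morrison denominator equals $1-N$. For $N>2$ this is nonzero, which simultaneously certifies that $\mathbf{A}$ is invertible and makes the inversion formula applicable.

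Next I would substitute into
\begin{align}
\mathbf{A}^{-1}=-\mathbf{D}^{-1}-\frac{\mathbf{D}^{-1}\mathbf{c}\,\mathbf{c}^*\mathbf{D}^{-1}}{1-N}\nonumber
\end{align}
and extract the $k$-th diagonal entry. The first term contributes $-1/\alpha_k$. For the correction term, the $k$-th components $(\mathbf{D}^{-1}\mathbf{c})_k=1/\overline{c_k}$ and $(\mathbf{c}^*\mathbf{D}^{-1})_k=1/c_k$ multiply to $1/|c_k|^2=1/\alpha_k$, so the diagonal of the rank-one correction is again $1/\alpha_k$. Collecting the two contributions,
\begin{align}
(\mathbf{A}^{-1})_{k,k}=\frac{1}{\alpha_k}\left(-1-\frac{1}{1-N}\right)=\frac{1}{\alpha_k}\,\frac{N-2}{1-N},\nonumber
\end{align}
and solving for $\alpha_k=|c_k|^2$ yields the claimed identity.

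There is no deep obstacle here; the computation is routine once the rank-one structure is spotted. The only points that require care are the bookkeeping of the complex conjugates so that the correction-term diagonal collapses cleanly to $1/\alpha_k$, the algebraic simplification $-1-\frac{1}{1-N}=\frac{N-2}{1-N}$, and the handling of the degenerate cases: invertibility of $\mathbf{D}$ (all $c_k\neq 0$) and the excluded value $N=1$, where the denominator $1-N$ vanishes. The latter is precisely why the hypothesis $N>2$ is imposed, and for such $N$ the factor $\tfrac{N-2}{1-N}$ is also nonzero, so dividing by $(\mathbf{A}^{-1})_{k,k}$ is legitimate and recovers each $|c_k|^2$ uniquely.
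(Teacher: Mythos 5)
Your proof is correct and takes essentially the same route as the paper: both write $\mathbf{A}=-\mathbf{D}+\mathbf{c}\mathbf{c}^*$ with $\mathbf{D}=\operatorname{diag}(\alpha_1,\ldots,\alpha_N)$, apply the matrix inversion (Sherman--Morrison) lemma using $\mathbf{c}^*\mathbf{D}^{-1}\mathbf{c}=N$, and read off the diagonal of $\mathbf{A}^{-1}$. Your bookkeeping is in fact slightly more careful than the paper's: its final display writes the contribution of $-\mathbf{D}^{-1}$ as $+\frac{1}{\alpha_k}$ rather than $-\frac{1}{\alpha_k}$ (a sign typo), yet states the same correct conclusion $({\mathbf{A}^{-1}})_{k,k}=\frac{N-2}{1-N}\frac{1}{\alpha_k}$ that you derive.
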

\begin{proof}
Denote by $\mathbf{D}=\operatorname{diag}(\alpha_1,\ldots,\alpha_N)$,
then we have $\mathbf{A}=-\mathbf{D}+\mathbf{cc}^*$. Applying the
matrix inversion Lemma, we obtain 
\begin{align}
\mathbf{A}^{-1}&=-\mathbf{D}^{-1}-(\mathbf{D}^{-1}\mathbf{c})(1-\mathbf{c}^*\mathbf{D}^{-1}\mathbf{c})^{-1}(\mathbf{c}^*\mathbf{D}^{-1})\nonumber \\
&=
-\mathbf{D}^{-1}-(\mathbf{D}^{-1}\mathbf{c})(\mathbf{c}^*\mathbf{D}^{-1})\frac{1}{1-N}.\nonumber
\end{align}
We conclude the proof noticing that the $k$-th element on the main
diagonal of $\mathbf{A}^{-1}$ is a function of $N$ and $\alpha_k$,
\begin{align}
{\mathbf{A}^{-1}}_{k,k}=\frac{1}{\alpha_k}-\frac{1}{1-N}\frac{1}{\alpha_k}=\frac{N-2}{1-N}\frac{1}{\alpha_k}.\nonumber
\end{align}
\end{proof}

Once we have recovered $\mathbf{C}$, we obtain the coefficients
$\mathbf{c}$ by taking the eigenvector of $\mathbf{C}$ corresponding
to the largest eigenvalue.

\section*{Acknowledgment}
The authors would like to thank Prof. Gervais Chapuis for the
insightful discussions on X-ray crystallography.  This research was
supported by an ERC Advanced Grant -- Support for Frontier Research --
SPARSAM Nr: 247006.

\begin{IEEEbiography}[{\includegraphics[width=1in,height=1.25in,clip,keepaspectratio]{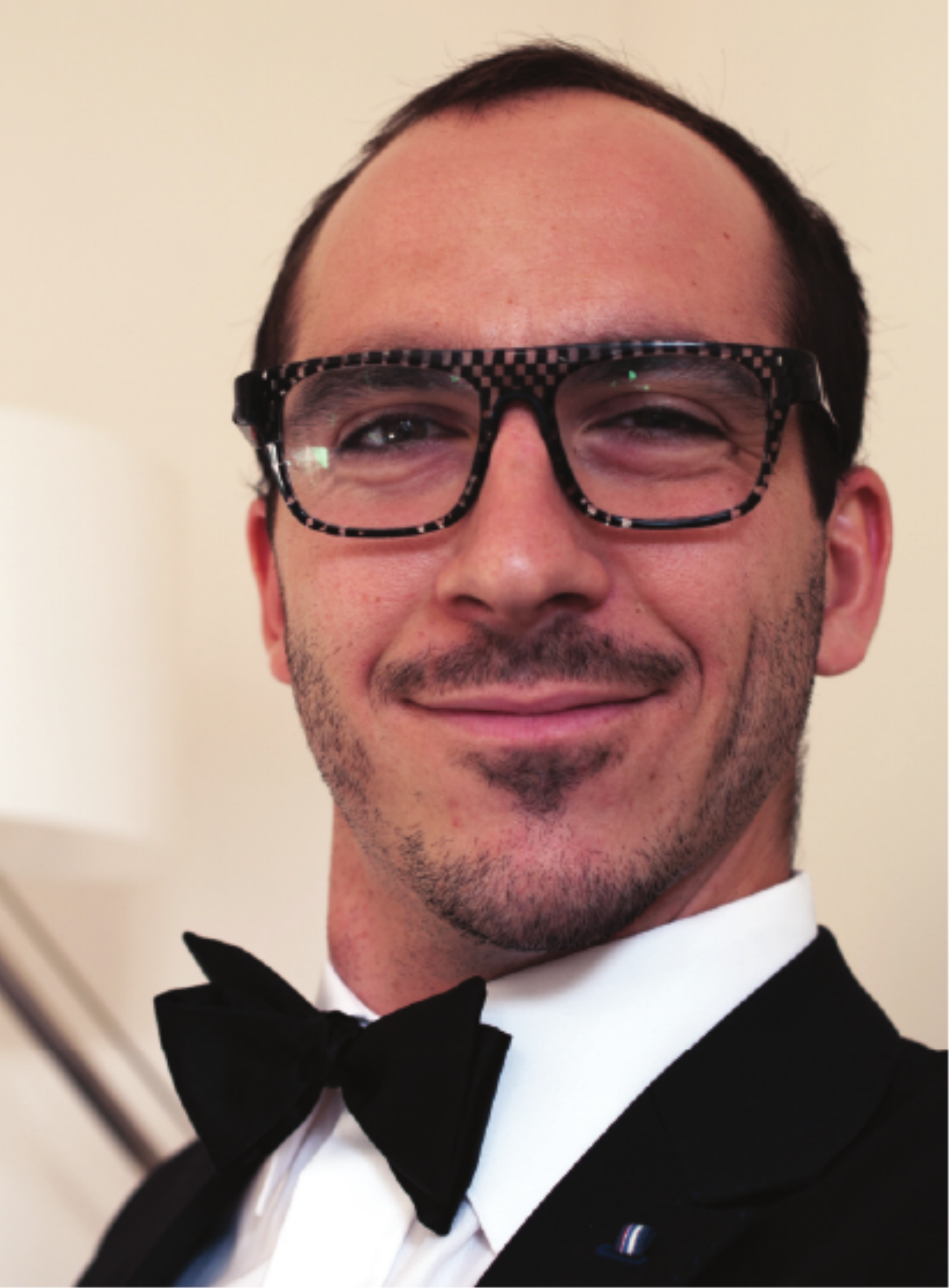}}]{Juri
    Ranieri} received both his M.S. and B.S. degree in Electronic
  Engineering in 2009 and 2007, respectively, from Universit{\'a} di
  Bologna, Italy. From July to December 2009, he joined as a visiting
  student the Audiovisual Communications Laboratory (LCAV) at EPF
  Lausanne, Switzerland. From January 2010 to August 2010, he was with
  IBM Zurich to investigate the lithographic process as a signal
  processing problem.  From September 2010, he is in the doctoral
  school at EPFL where he joined LCAV under the supervision of
  Prof. Martin Vetterli and Dr. Amina Chebira. From April 2013 to July
  2013, he was an intern at Lyric Labs of Analog Devices, Cambridge,
  USA. His main research interests are inverse problems of physical fields
  and the spectral factorization of autocorrelation functions.
\end{IEEEbiography}
\vspace{-1cm}
\begin{IEEEbiography}[{\includegraphics[width=1in,height=1.25in,clip,keepaspectratio]{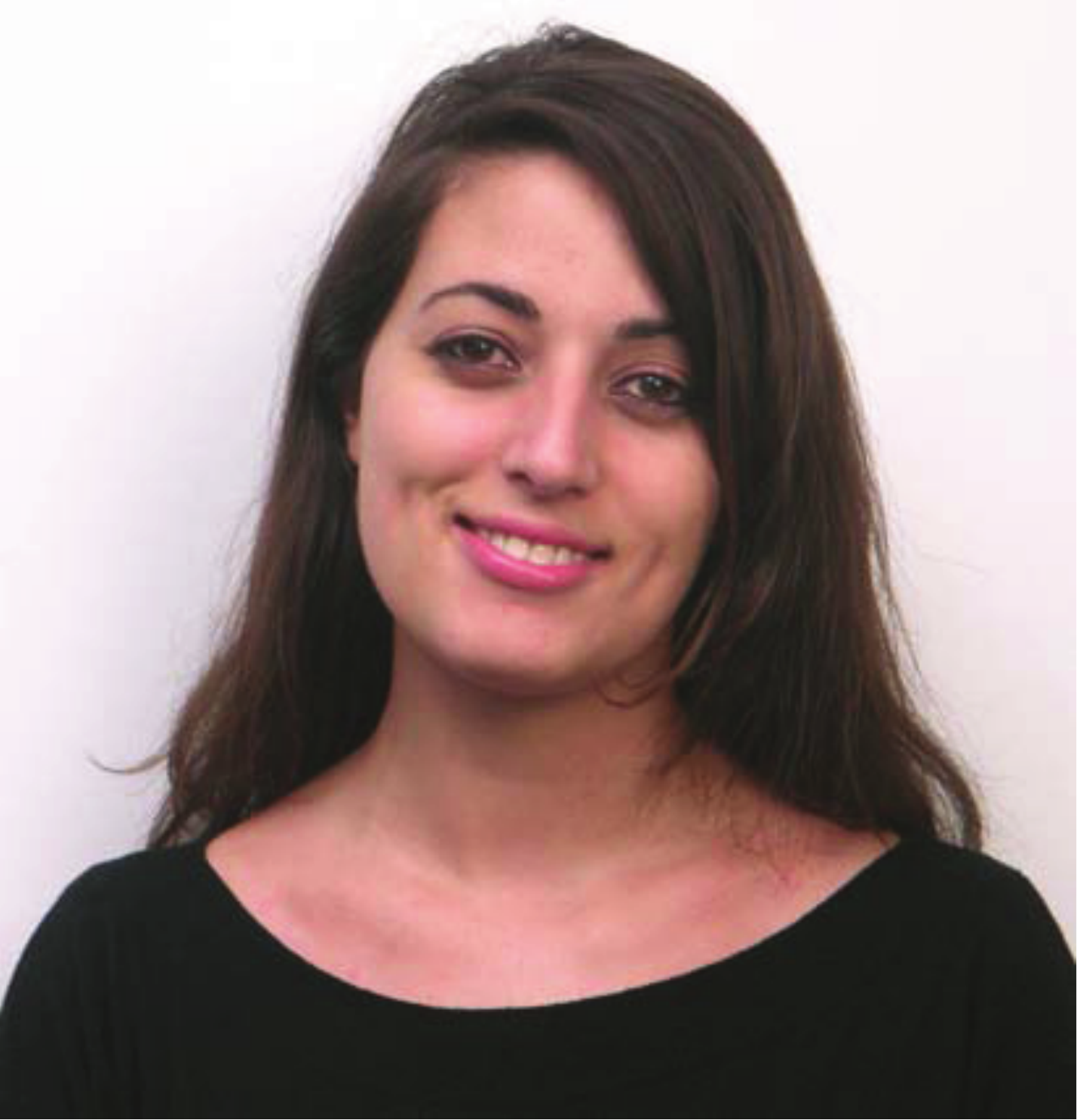}}]{Amina
    Chebira} is a senior research and development engineer at the
  Swiss Center for Electronics and Microtechnology (CSEM) in Neuch{\^
    a}tel, Switzerland. In 1998, she obtained a Bachelor degree in
  mathematics from University Paris 7 Denis Diderot. She received the
  B.S. and M.S. degrees in communication systems from the Ecole
  Polytechnique Fédérale de Lausanne (EPFL) in 2003 and the
  Ph.D. degree from the Biomedical Engineering Department, Carnegie
  Mellon University, Pittsburgh, PA, in 2008, for which she received
  the biomedical engineering research award. She then held a
  Postdoctoral Researcher position with the Audiovisual Communications
  Laboratory, EPFL, from 2008 to 2012. Her research interests include
  frame theory and design, biomedical signal and image processing, pattern
  recognition, filterbanks and multiresolution theory. 
\end{IEEEbiography}

\vspace{-1cm}

\begin{IEEEbiography}[{\includegraphics[width=1in,height=1.25in,clip,keepaspectratio]{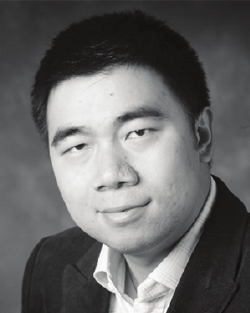}}]{Yue
    M. Lu} received the M.Sc. degree in mathematics and the
  Ph.D. degree in electrical engineering from the University of
  Illinois at Urbana-Champaign, Urbana, IL, both in 2007.  He was a
  Research Assistant at the University of Illinois at
  Urbana-Champaign, and was with Microsoft Research Asia, Beijing,
  China and Siemens Corporate Research, Princeton, NJ. From September
  2007 to September 2010, he was a postdoctoral researcher at the
  Audiovisual Communications Laboratory at Ecole Polytechnique
  Fédérale de Lausanne (EPFL), Switzerland. He is currently an
  Assistant Professor of electrical engineering at Harvard University,
  Cambridge, MA, directing the Signals, Information, and Networks
  Group (SING) at the School of Engineering and Applied Sciences. His
  research interests are in the general areas of signal processing,
  statistical inference and imaging.  He received the Most Innovative
  Paper Award of IEEE International Conference on Image Processing
  (ICIP) in 2006 for his paper (with Minh N. Do) on the construction
  of directional multiresolution image representations, and the Best
  Student Paper Award of IEEE ICIP in 2007. He also coauthored a paper
  (with Ivan Dokmanić and Martin Vetterli) that won the Best Student
  Paper Award of IEEE International Conference on Acoustics, Speech
  and Signal Processing in 2011.
\end{IEEEbiography}
\vspace{-1cm}
\begin{IEEEbiography}[{\includegraphics[width=1in,height=1.25in,clip,keepaspectratio]{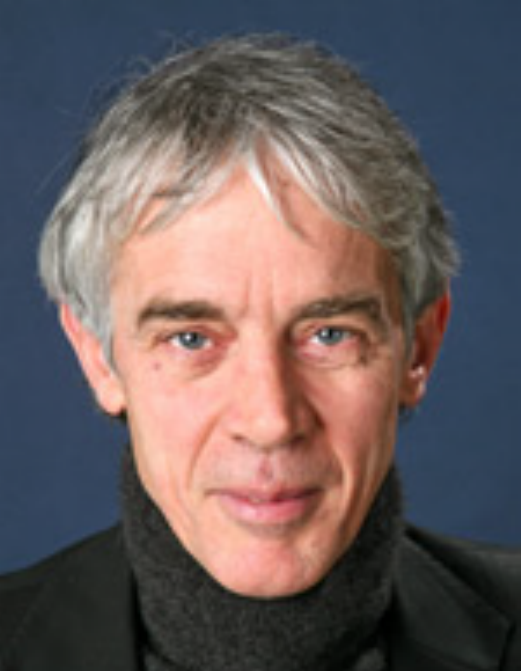}}]{Martin
    Vetterli} was born in 1957 and grew up near Neuchatel. He received
  the Dipl. El.-Ing. degree from Eidgenossische Technische Hochschule
  (ETHZ), Zurich, in 1981, the Master of Science degree from Stanford
  University in 1982, and the Doctorat es Sciences degree from the
  Ecole Polytechnique Federale, Lausanne, in 1986. After his
  dissertation, he was an Assistant and then Associate Professor in
  Electrical Engineering at Columbia University in New York, and in
  1993, he became an Associate and then Full Professor at the
  Department of Electrical Engineering and Computer Sciences at the
  University of California at Berkeley. In 1995, he joined the EPFL as
  a Full Professor. He held several positions at EPFL, including Chair
  of Communication Systems and founding director of the National
  Competence Center in Research on Mobile Information and
  Communication systems (NCCR-MICS). From 2004 to 2011 he was Vice
  President of EPFL and from March 2011 to December 2012, he was the
  Dean of the School of Computer and Communications Sciences. Since
  January 2013, he leads the Swiss National Science Foundation. He
  works in the areas of electrical engineering, computer sciences and
  applied mathematics. His work covers wavelet theory and
  applications, image and video compression, self-organized
  communications systems and sensor networks, as well as fast
  algorithms, and has led to about 150 journals papers. He is the
  co-author of three textbooks, with J. Kovacevic, ”Wavelets and
  Subband Coding” (PrenticeHall, 1995), with P. Prandoni, ”Signal
  Processing for Communications”, (CRC Press, 2008) and with
  J. Kovacevic and V. Goyal, of the forthcoming book ”Fourier and
  Wavelet Signal Processing” (2012). His research resulted also in
  about two dozen patents that led to technology transfers to
  high-tech companies and the creation of several start-ups. His work
  won him numerous prizes, like best paper awards from EURASIP in 1984
  and of the IEEE Signal Processing Society in 1991, 1996 and 2006,
  the Swiss National Latsis Prize in 1996, the SPIE Presidential award
  in 1999, the IEEE Signal Processing Technical Achievement Award in
  2001 and the IEEE Signal Processing Society Award in 2010. He is a
  Fellow of IEEE, of ACM and EURASIP, was a member of the Swiss
  Council on Science and Technology (2000-2004), and is an ISI highly
  cited researcher in engineering.
\end{IEEEbiography}

\bibliographystyle{IEEEtran} \bibliography{biblio.bib}

\end{document}